\documentclass[runningheads]{llncs}

\usepackage[T1]{fontenc}
\usepackage[utf8]{inputenc}
\usepackage{amsmath,amssymb,mathtools}
\usepackage{graphicx}
\usepackage{tikz}
\definecolor{myblack}{RGB}{0,0,0}
\definecolor{border}{RGB}{206,206,206}
\DeclareRobustCommand{\sizeFtwo}{\Large}
\DeclareRobustCommand{\sizeFtwop}{\normalsize}
\usepackage{enumitem}
\usepackage[disable]{todonotes}
\usepackage{url}
\usepackage[most]{tcolorbox}
\usepackage{xcolor}
\newif\ifnamealgebra
\namealgebrafalse

\definecolor{thesissource}{RGB}{0,92,138}
\definecolor{ijucsource}{RGB}{0,78,110}
\definecolor{proposal}{RGB}{92,45,145}
\definecolor{changed}{RGB}{180,25,25}
\definecolor{todocolor}{RGB}{128,88,0}
\newtcolorbox{sourceblock}[2][]{breakable,enhanced,
  colback=#2!4,colframe=#2!75!black,boxrule=.45pt,
  left=4pt,right=4pt,top=3pt,bottom=3pt,
  title={Exact source: #1},fonttitle=\bfseries\footnotesize}
\newtcolorbox{proposalblock}[1][]{breakable,enhanced,
  colback=proposal!4,colframe=proposal!75!black,boxrule=.6pt,
  left=4pt,right=4pt,top=3pt,bottom=3pt,
  title={Proposed paper version: #1},fonttitle=\bfseries\footnotesize}
\newcommand{\changed}[1]{\textcolor{changed}{#1}}

\title{From Border-Agnosticism to Causal Rewrite Systems}
\author{Pablo Arrighi\inst{1} \and Marin Costes\inst{2} \and Luidnel Maignan\inst{3}}
\authorrunning{P. Arrighi, M. Costes, L. Maignan}
\institute{Universit\'e Paris-Saclay, Inria, CNRS, LMF, 91190 Gif-sur-Yvette, France\\
\email{pablo.arrighi@universite-paris-saclay.fr}
\and
Centre for Quantum Information and Communication, \'Ecole polytechnique de Bruxelles,
Universit\'e libre de Bruxelles, 1050 Brussels, Belgium\\
\email{marin.costes@ens-paris-saclay.fr}
\and
LACL, Univ Paris Est Creteil, 94000 Cr\'eteil, France\\
\email{luidnel.maignan@u-pec.fr}}

\newcommand{\calC}{\mathcal{C}}
\newcommand{\calD}{\mathcal{D}}
\newcommand{\calG}{\mathcal{G}}
\newcommand{\calR}{\mathcal{R}}
\newcommand{\calS}{\mathcal{S}}
\newcommand{\calV}{\mathcal{V}}
\newcommand{\calX}{\mathcal{X}}
\newcommand{\I}{\mathrm{I}}
\newcommand{\B}{\mathrm{B}}
\newcommand{\E}{\mathrm{E}}
\newcommand{\V}{\mathrm{V}}
\newcommand{\Past}{\mathrm{Past}}
\newcommand{\X}{\calX}
\newcommand{\sshift}{\calS}
\newcommand{\D}{\calD}
\renewcommand{\sp}[1]{\lfloor #1\rfloor}
\newcommand{\namealg}[1]{\widehat{#1}}
\newcommand{\nameeq}{\,\hat{=}\,}
\newcommand{\namesubseteq}{\,\hat{\subseteq}\,}
\newcommand{\namecap}{\,\hat{\cap}\,}
\newcommand{\possubseteq}[2]{\ifnamealgebra #1\namesubseteq #2\else
  \sp{#1}\subseteq\sp{#2}\fi}
\newcommand{\posdisjoint}[2]{\ifnamealgebra \neg(#1\namecap #2)\else
  \sp{#1}\cap\sp{#2}=\varnothing\fi}
\newcommand{\restr}[2]{\mathcal{N}_{#1}(#2)}
\newcommand{\restri}[1]{\mathcal{N}_{#1}}
\newcommand{\crestri}[1]{\overline{\mathcal{N}}_{#1}}
\newcommand{\restriA}[1]{\mathcal{N}_{#1}}
\newcommand{\restriD}[2]{\mathcal{N}^{#1}_{#2}}
\newcommand{\crestriD}[2]{\overline{\mathcal{N}}^{#1}_{#2}}
\newcommand{\restrD}[3]{\mathcal{N}^{#1}_{#2}(#3)}
\newcommand{\graphvalidexplicit}[2]{\Gamma_{#2}}
\newcommand{\graphvalid}[2]{\graphvalidexplicit{#1}{#2}}
\newcommand{\graphvalidmutex}[2]{\Gamma^{#1}_{#2}}
\newcommand{\diskA}[1]{\sshift_{\restriA{#1}}}
\newcommand{\dangling}[1]{\E_{#1}\setminus(\I_{#1}\!:\!\pi)^2}
\newcommand{\xCone}{\calC_x}
\newcommand{\xcone}[1]{\calC_{#1}}

\newcommand{\calP}{\mathcal{P}}
\newcommand{\position}{position}
\definecolor{rxone}{RGB}{0,110,160}
\definecolor{rxtwo}{RGB}{170,85,0}
\newcommand{\RX}[2]{%
  #1
}

\newcommand{\restriB}[1]{\mathcal{B}_{#1}}
\newcommand{\diskB}[1]{\sshift_{\restriB{#1}}}

\begin{document}
\maketitle

\begin{abstract}
Causal Graph Rewriting was recently introduced as an evolution of Causal Graph Dynamics to allow for a controlled form of asynchronism in these dynamical systems with dynamical graphs.
In this context, the notion of borders---akin to dangling edges---has gained increasing importance.
In this paper, we re-examine these borders, interpreting them as missing information.
This perspective adds structural rigor to the framework, allowing border-agnostic causal graph rewriting to be reformulated as a rewriting system rather than a dynamical system.
This shift represents a crucial step toward facilitating the treatment of reversibility.
\keywords{Graph rewriting \and Local graph dynamics \and Neighbourhood
schemes \and Causality \and Renaming invariance}
\end{abstract}

\section{Introduction}\label{sec:introduction}

In a rewrite system, locality is visible in the syntax: a left-hand side is
matched and replaced while the context is retained . In
the dynamical-systems tradition, on the other hand, locality is often
axiomatic. For instance, cellular automata  are global maps \emph{induced} by
a uniform local map; causal graph dynamics extend this viewpoint to spaces
whose graph structure may itself
evolve~\cite{ArrighiCGD,ArrighiRCGD}. This axiomatic presentation is also
classical in their quantum extensions~\cite{ArrighiQCGD,ArrighiQNT}.

Causal graph rewriting, introduced recently, stands between these two
traditions~\cite{ArrighiSpaceTimeDetWRLA,arrighi2026causalgraphrewriting}.
Updates are asynchronous, and directed edges encode both connectivity and
dependencies between events. Vertices carry time tags, and a local rule
consumes a ready event and rewrites a neighbourhood. Graphs may be partial
views, represented by border edges and border vertices. The link with
rewrite theory is the asynchronous regime itself, in which each update may
be scheduled freely. The definition of local rules, however, borrows from
the axiomatic side: a local rule acting at \(x\) on a graph \(G\) is a
\emph{global} function of the form \(A_xG=A_x(G_{\restri{x}}) \sqcup G_{\crestri{x}}\),
where \(G_{\restri{x}}\) is the neighbourhood of \(x\) in \(G\) and
\(G_{\crestri{x}}\) the rest of the graph. The neighbourhood scheme
\(\restri{}\) is only required to be
\emph{extensive}~\cite{ArrighiSpaceTimeDetWRLA,arrighi2026causalgraphrewriting}
(called extensibility in~\cite{ArrighiQNT}): restricting a graph to an
intermediate view that still contains the selected neighbourhood should produce the same neighborhood. This condition implies idempotence (the neighborhood of a neighborhood is that neighborhood), which is needed to pass to
the quantum setting, and it sufficed for the results obtained so far.

\paragraph{Taking borders seriously.}
The main point of this paper is that being careful with the meaning of
borders provides a first solid ground to unify causal graph rewriting with
traditional graph rewriting and rewrite systems. A border vertex should read as missing information, i.e.\ a part
of the graph that is not known, rather than the end of the graph, in the same spirit as~\cite{DBLP:conf/gg/MaignanS24}.
Extensivity does not respect this reading: it only compares a graph with
its subgraphs, so a neighbourhood selected on a partial view may still
change when information is revealed beyond its borders. Taking the missing-information reading
seriously, we restrict to \emph{border-agnostic} neighbourhood schemes, which
select the same neighbourhood whatever is revealed around it. This
restriction is natural: it is equivalent to a safety principle, stating that
a local rule cannot act with less information than a neighbourhood. It also yields the two properties needed to
present local rules as rewrite systems. First, \emph{disk uniqueness}: a
graph contains at most one disk at each position, so that matching a
left-hand side is unambiguous. Second, \emph{disk-completability}: every
disk is the neighbourhood of its position in a graph where all its
vertices are internal, so that the structural properties
of~\cite{arrighi2026causalgraphrewriting}, and in particular the preservation
of border edges, hold on every disk. As in graph rewriting, where border
edges must be preserved by pattern substitution for the global
transformation to be well defined, this allows a uniform description of the
action of local rules on disks. Building on these two properties, we show
that local rules are exactly the actions of \emph{causal rewrite systems},
families of pattern substitutions subject to conditions that can mostly be
checked rule by rule.

\paragraph{Contributions.}
\begin{enumerate}[leftmargin=*,itemsep=.25ex]
\ifnamealgebra
\item We record the name algebra used to split and merge vertices without
choosing globally fresh names. It adapts~\cite{ArrighiCreation} so that time
tags commute with these operations, and so that the later structural lemmas
can speak of position fragments.
\fi
\item We introduce border-agnosticism (Def.~\ref{def:strong-ext}), separate
it from extensivity (Ex.~\ref{ex:rounded}), and show that it is equivalent
to the safety principle (Prop.~\ref{prop:safety}).
\item We show that border-agnosticism is equivalent to disk uniqueness
(Lem.~\ref{rem:disk-unique}) and, on border-completable sets of graphs,
implies disk-completability (Lem.~\ref{lem:disk-completable},
Ex.~\ref{ex:neighbours}). The latter removes the extendability premise from
the structural properties of local rules
of~\cite{arrighi2026causalgraphrewriting}.\ifnamealgebra{} These properties
also hold for the name algebra (Lem.~\ref{lem:pos-preservation}).\fi
\item Our main result is a representation theorem: border-agnostic
neighbourhood schemes correspond exactly to mutex sets of cones
(Prop.~\ref{prop:neigh-mutex}), and local rules correspond exactly to causal
rewrite systems (Thm.~\ref{thm:representation}).
\end{enumerate}

Sec.~\ref{sec:setting} recalls causal graph rewriting.
Sec.~\ref{sec:strong} introduces border-agnosticism and its consequences.
Sec.~\ref{sec:crs} proves the representation theorem, and
Sec.~\ref{sec:related} discusses related work.

\section{Causal graphs and local rules}\label{sec:setting}
We briefly recall the formalism of causal graph rewriting; a gentler
introduction can be found in~\cite{arrighi2026causalgraphrewriting}.
\ifnamealgebra
\subsection{A compact name algebra}\label{sec:names}

If a local rule splits \(x\), it cannot inspect the whole graph for two fresh
names. The local choices \(x.l,x.r\) solve this problem, while reversal
requires \(x.l\lor x.r=x\). In causal rewriting, names also carry time tags.
We use the following definition from the source development.

\begin{definition}[Name algebra]\label{def:namealgebra}
The {\em name algebra} $\calV$ is defined as the terms given by the grammar 
\begin{equation}
u,v\ ::=\ m\ |\ u.p\ |\ u\lor v\ |\  t.u\quad\text{with}\quad m\in \mathbb{N},\ p\in \{l,r\}^{*},\ t\in \mathbb{Z}
\label{eq:algebra}
\end{equation}
and endowed with the following equality theory (with $ \varepsilon $ the empty word):
\begin{align*}
u.\varepsilon&=u  & u.(p\cdot q)&=(u.p).q   &  u.l\lor u.r&=u  \\  
(u\lor v).l&=u   &  (u\lor v).r&=v  &  t.(u.p)&=(t.u).p \\
t.(u\lor v)&=t.u\lor t.v   &  s.(t.u)&=(s+t).u  
 &   0.u&=u.
\end{align*}
The {\em \position\ algebra} $\X$ is defined as the subset of those terms of $\calV$ obtained without using the last case of Eq.~\eqref{eq:algebra}.
Let  $\sp{\cdot}:\calV \to \X$ be the projector on the \position s inductively as follows:
\begin{align*}
\sp{m}&=m & \sp{u.p}&=\sp{u}.p & \sp{u\lor v}&=\sp{u}\lor \sp{v} & \sp{t.u}&=\sp{u}
\end{align*}
Consider $U\subseteq\calV$. The {\em closure} $\namealg{U}$ is defined as the smallest subalgebra of $\calV$ (i.e. closed under the last three operations of Eq. \eqref{eq:algebra}) that contains $U$.\\ 
We use letters $m,n$ to designate elements of $\mathbb{N}$; $x,y$ to designate \position s in $\X$;  
$s,t$ to designate time tags in $\mathbb{Z}$; and $u,v$ to designate names in $\calV$.\\
We use $U\nameeq U'$ as a shorthand notation for $\namealg{U}=\namealg{U'}$, $U\namecap U'$ as a shorthand notation for $\namealg{U}\cap\namealg{U'}\neq \varnothing$, and  $U\namesubseteq U'$ as a shorthand notation for $\namealg{U}\subseteq\namealg{U'}$.
\end{definition}

{\color{changed}
Names built from a set $U$ cannot avoid the position fragments of $U$. Let us
say that $u,v\in\calV$ \emph{overlap} when $\sp{u}.p=\sp{v}.p'$ for some
$p,p'\in\{l,r\}^*$; the non-overlapping condition of Def.~\ref{def:graphs}
below forbids two distinct vertices of a graph from overlapping. The
following lemma is the only property of $\namealg{\cdot}$ that we use.

\begin{lemma}[Position fragments]\label{lem:fragments}
Let $U\subseteq\calV$ and $u\in\namealg{U}$. Then
\begin{enumerate}[leftmargin=*,itemsep=.25ex]
\item for every $q\in\{l,r\}^*$ there are $v\in U$ and $p,p'\in\{l,r\}^*$
such that $\sp{u}.q.p=\sp{v}.p'$; in particular $u$ overlaps some $v\in U$;
\item if $u$ overlaps $w\in\calV$, then some $v\in U$ overlaps $w$.
\end{enumerate}
\end{lemma}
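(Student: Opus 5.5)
Both items go by structural induction on how $u$ is generated in $\namealg{U}$, i.e.\ on a term built from elements of $U$ using $\cdot.p$, $\lor$ and $t.\cdot$. Since the statement only involves $\sp{u}$, I first apply the projector and work in $\X$. Because $\sp{\cdot}$ commutes with the three operations ($\sp{u.p}=\sp{u}.p$, $\sp{u\lor v}=\sp{u}\lor\sp{v}$, $\sp{t.u}=\sp{u}$), $\sp{u}$ lies in the closure of $\sp{U}$ under $.p$ and $\lor$. Time tags therefore disappear, and the case $t.u$ is immediate from the induction hypothesis. One subtlety is that $u$ is an equivalence class, not a syntactic term. The induction is therefore over some generating term of $u$, and I only use the equations of Def.~\ref{def:namealgebra} to rewrite $\sp{u}.q$. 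Then the property proved is a property of the class, since it is stated through equalities.

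For item 1, I prove the stronger invariant: for all $q$ there exist $v\in U$ and $p,p'$ with $\sp{u}.q.p=\sp{v}.p'$.
\begin{itemize}
\item \emph{Base case} $u=v\in U$: take $p=\varepsilon$, $p'=q$.
\item \emph{Case} $u=w.r_0$: apply the hypothesis for $w$ to $r_0\cdot q$, using $u.q=w.(r_0\cdot q)$.
\item \emph{Case} $u=w_1\lor w_2$, $q$ empty: $\sp{u}.l=\sp{w_1}$, so take $p=l$ and use the hypothesis for $w_1$ with empty $q$.
\item \emph{Case} $u=w_1\lor w_2$, $q=l\cdot q'$ (resp.\ $r\cdot q'$): $\sp{u}.q=\sp{w_1}.q'$ (resp.\ $\sp{w_2}.q'$), and the hypothesis applies.
\end{itemize}
Taking $q=\varepsilon$ gives the ``in particular'' part.

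For item 2, suppose $\sp{u}.p_0=\sp{w}.p_1$. Item 1 with $q=p_0$ gives $\sp{u}.p_0.p=\sp{v}.p'$. Then $\sp{w}.(p_1\cdot p)=\sp{u}.p_0.p=\sp{v}.p'$, so $v$ overlaps $w$. This is the reason the stronger form of item 1, quantified over all $q$, is needed.

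The main obstacle is well-definedness under the equality theory. The equations $u.l\lor u.r=u$ and $(u\lor v).l=u$ let a term be rewritten into a different shape, and the case split must be phrased on a chosen term rather than on a normal form. I would handle this by noting that every step above uses only equations valid in $\calV$, so each equality derived from one representative holds for the class. If a case does not fit, the fallback is to orient the equations into a terminating, confluent rewriting system: push $.p$ inward through $\lor$ and collapse $u.l\lor u.r$. One then checks that normal forms are either $m.p$ or $\lor$-trees of such terms, and does the induction on normal forms.
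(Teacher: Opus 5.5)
Your proof is correct and follows the paper's argument essentially step for step. The paper also uses induction on the generation of $u$ from $U$, with the same cases (base, $.p$, $t.$, and $\lor$ split on $q=\varepsilon$, $l\cdot q'$, $r\cdot q'$), and derives item 2 from item 1 applied to the overlap word. There is one small slip: in the $\lor$ case with $q=\varepsilon$, the witness is $p=l\cdot p_1$, with $p_1$ supplied by the induction hypothesis for $w_1$, not $p=l$. Your normal-form fallback is not needed, since induction on a generating derivation only uses equations valid in $\calV$.
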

\begin{proof}
Claim $1$ is by induction on the generation of $u$ from $U$ by the last three
operations of Eq.~\eqref{eq:algebra}. If $u\in U$, take $v=u$,
$p=\varepsilon$ and $p'=q$. If $u=u_1.p_0$, then
$\sp{u}.q=\sp{u_1}.(p_0\cdot q)$ and the induction hypothesis on $u_1$ with
the word $p_0\cdot q$ concludes. If $u=t.u_1$, then $\sp{u}=\sp{u_1}$. If
$u=u_1\lor u_2$, then either $q=l\cdot q'$ and $\sp{u}.q=\sp{u_1}.q'$, or
$q=r\cdot q'$ and $\sp{u}.q=\sp{u_2}.q'$, or $q=\varepsilon$ and
$\sp{u}.l=\sp{u_1}$ reduces the claim to $u_1$ with the word $\varepsilon$.
For claim $2$, let $\sp{u}.q=\sp{w}.q'$ and apply claim $1$ to the word $q$:
it gives $v\in U$ with $\sp{u}.q.p=\sp{v}.p'$, hence
$\sp{w}.q'.p=\sp{v}.p'$, so $v$ and $w$ overlap.
\end{proof}}

\begin{definition}[Renaming and renaming-invariance]\label{def:renaming}
A renaming is a function $R:\calV\rightarrow \calV$ such that
\[
R(u.p)=R(u).p\qquad R(u\lor v)=R(u)\lor R(v)\qquad
R(t.u)=t.R(u)
\]
and verifying that $\sp{R(.)}:\X\rightarrow\X$ is a bijection. It is fully
specified by its action on domain $\mathbb{N}$. It is extended to act upon
graphs by renaming their nodes.
\end{definition}
\else

\subsection{Graphs and locality}\label{sec:graphs}

We work with directed acyclic labelled port graphs: edges are attached to
ports of the vertices, and each internal vertex carries a state. Labelled
port graphs are standard in distributed
computing~\cite{PapazianRemila,Chalopin,KrivineKappa}, where ports
distinguish neighbouring processes and labels represent their states;
directed acyclic graphs commonly capture dependencies between
processes~\cite{CausalSets,AsynchronousSimsSync}. A vertex \(t.x\) is an
event: a process at position \(x\) with time tag \(t\). Graphs may be partial
views, with \emph{border} edges linking internal vertices to border vertices,
whose states are unknown.

\begin{definition}[Positions, names, ports, states]\label{def:namealgebra}
Fix a countably infinite set $\calX$ of \emph{\position s}.
Let us denote $\calV := \{\, t.x \mid (t,x) \in \mathbb{Z} \times \calX \,\}$ and call its elements \emph{names}. We say that $u = t.x\in \calV$ is at position $x$ and denote $t'.u := (t'+t).x$ for any $t' \in \mathbb{Z}$. For any subset $U \subseteq \calV$, we denote $\sp{U} := \{\, x\in \calX \mid \exists t\in \mathbb{Z}, t.x\in U \,\}$ the set of underlying positions.
 For any $X\subseteq \mathcal{X}$, let us denote $\overline{X} := \calX \setminus X$.
Finally, fix a finite sets of \emph{ports} \(\pi\) and \emph{states} \(\Sigma\).
\end{definition}
\fi

\begin{definition}[Graphs]\label{def:graphs}
A graph $G$ is given by a tuple $(\I_G,\B_G,\E_G,\sigma_G)$ where:
\begin{itemize}[leftmargin=*]
\item $\I_G \subseteq \calV$ is the set of \emph{internal vertices of $G$},
\item $\B_G \subseteq \calV \setminus \I_G$ is the set of \emph{border
vertices of $G$},
\item $\E_G \subseteq (\V_G\!:\!\pi )^{2} \setminus
( \B_{G}\!:\!\pi )^2$ is the set of \emph{(oriented) edges}, and
\item $\sigma_G :\I_G\rightarrow \Sigma$ maps each internal vertex to its
state.
\end{itemize}
where we denote by $\V_G := \I_G\cup \B_G$ the set of all vertices of the
graph, and by $(V\!:\!\pi):=\{v\!:\!p\mid v\in V,\,p\in\pi\}$ the set of ports
of some set of vertices $V$. Moreover the tuple has to be such that:
\begin{description}[leftmargin=!,labelwidth=3.4cm,itemsep=.25ex]
\item[acyclicity:] the graph has no cycles.
\item[border-attachment:]
$\forall u \in \B_G,\, \exists (v\!:\!a,v'\!:\!a') \in \E_G,\,
u \in \{\,v,v'\,\}$, i.e.\ border vertices are connected to at least one
edge.
\item[port-saturation:]
$\forall (u\!:\!a,v\!:\!b),(u'\!:\!a',v'\!:\!b') \in {\E_G},\,
u\!:\!a\neq v'\!:\!b' \,\wedge\,
(u\!:\!a=u'\!:\!a' \Leftrightarrow v\!:\!b=v'\!:\!b')$, i.e.\ ports are used
only once---so as to distinguish each neighbour.
\ifnamealgebra
\item[non-overlapping positions:]%
\[
\begin{gathered}
\changed{\forall u,v\in\V_G,\ \forall p,p'\in\{l,r\}^*,}\\
\changed{\sp{u}.p=\sp{v}.p'\ \Longrightarrow\ u=v.}
\end{gathered}
\]
i.e.\ position fragments appear only once---so as to avoid name conflicts
when merging or splitting names.
\else
\item[unicity of positions:]
$\forall u,v\in\V_G,\ \sp{u}=\sp{v}\Longrightarrow u=v$, i.e.\ each
position is carried by at most one vertex.
\fi
\item[finite chains:] the graph only contains paths of finite length.
\end{description}
We denote by $\Past(G)\subseteq\I_G$ the vertices with no incoming edges and
by $\calG$ the set of all graphs. We call \emph{border edges} the elements of \(\dangling{G}\).
\end{definition}

The local evolution of these graphs is governed by three main operations.

First, for \(U\subseteq\calV\), the induced subgraph \(G_U\) has internal vertices
\(\I_G\cap U\) and all edges touching one of them; adjacent vertices outside
\(U\) become border vertices (Fig.~\ref{fig:induced-subgraph}). We write \(H\sqsubseteq G\) when \(H\) is an
induced subgraph of \(G\). For \(X\subseteq\calX\), \(G_X\) denotes induction
on names at positions in \(X\).

\begin{figure}[t]
\hfil
\begin{minipage}[b]{0.253\textwidth}
    \centering
    \resizebox{\textwidth}{!}{\tikzset{every picture/.style={line width=0.75pt}} %set default line width to 0.75pt        

\begin{tikzpicture}[x=0.75pt,y=0.75pt,yscale=-1,xscale=1]
%uncomment if require: \path (0,312); %set diagram left start at 0, and has height of 312

%Straight Lines [id:da006076172788554324] 
\draw [line width=2.25]    (220,185) -- (155,185) ;
\draw [shift={(150,185)}, rotate = 360] [fill=myblack  ][line width=0.08]  [draw opacity=0] (14.29,-6.86) -- (0,0) -- (14.29,6.86) -- cycle    ;
%Straight Lines [id:da9427679372526456] 
\draw [line width=2.25]    (100,285) -- (165,285) ;
\draw [shift={(170,285)}, rotate = 180] [fill=myblack  ][line width=0.08]  [draw opacity=0] (14.29,-6.86) -- (0,0) -- (14.29,6.86) -- cycle    ;
%Straight Lines [id:da3053547048924985] 
\draw [color=border  ,draw opacity=1 ][line width=2.25]  [dash pattern={on 2.53pt off 3.02pt}]  (135,170) -- (87.91,104.07) ;
\draw [shift={(85,100)}, rotate = 54.46] [fill=border  ,fill opacity=1 ][line width=0.08]  [draw opacity=0] (14.29,-6.86) -- (0,0) -- (14.29,6.86) -- cycle    ;
%Straight Lines [id:da07898649264243718] 
\draw [line width=2.25]    (185,270) -- (137.91,204.07) ;
\draw [shift={(135,200)}, rotate = 54.46] [fill=myblack  ][line width=0.08]  [draw opacity=0] (14.29,-6.86) -- (0,0) -- (14.29,6.86) -- cycle    ;
%Straight Lines [id:da6885992751563419] 
\draw [color=border  ,draw opacity=1 ][line width=2.25]  [dash pattern={on 2.53pt off 3.02pt}]  (235,170) -- (187.91,104.07) ;
\draw [shift={(185,100)}, rotate = 54.46] [fill=border  ,fill opacity=1 ][line width=0.08]  [draw opacity=0] (14.29,-6.86) -- (0,0) -- (14.29,6.86) -- cycle    ;
%Straight Lines [id:da37379453049535494] 
\draw [line width=2.25]    (285,270) -- (237.91,204.07) ;
\draw [shift={(235,200)}, rotate = 54.46] [fill=myblack  ][line width=0.08]  [draw opacity=0] (14.29,-6.86) -- (0,0) -- (14.29,6.86) -- cycle    ;
%Straight Lines [id:da5483691007548878] 
\draw [line width=2.25]    (185,270) -- (232.09,204.07) ;
\draw [shift={(235,200)}, rotate = 125.54] [fill=myblack  ][line width=0.08]  [draw opacity=0] (14.29,-6.86) -- (0,0) -- (14.29,6.86) -- cycle    ;
%Straight Lines [id:da38967403263107525] 
\draw [line width=2.25]    (235,170) -- (282.09,104.07) ;
\draw [shift={(285,100)}, rotate = 125.54] [fill=myblack  ][line width=0.08]  [draw opacity=0] (14.29,-6.86) -- (0,0) -- (14.29,6.86) -- cycle    ;
%Shape: Circle [id:dp2683748824167219] 
\draw  [color=myblack  ,draw opacity=1 ][fill={rgb, 255:red, 255; green, 255; blue, 255 }  ,fill opacity=1 ][line width=3]  (170,285) .. controls (170,276.72) and (176.72,270) .. (185,270) .. controls (193.28,270) and (200,276.72) .. (200,285) .. controls (200,293.28) and (193.28,300) .. (185,300) .. controls (176.72,300) and (170,293.28) .. (170,285) -- cycle ;
%Shape: Circle [id:dp5756000109789217] 
\draw  [color=myblack  ,draw opacity=1 ][fill={rgb, 255:red, 255; green, 255; blue, 255 }  ,fill opacity=1 ][line width=3]  (70,285) .. controls (70,276.72) and (76.72,270) .. (85,270) .. controls (93.28,270) and (100,276.72) .. (100,285) .. controls (100,293.28) and (93.28,300) .. (85,300) .. controls (76.72,300) and (70,293.28) .. (70,285) -- cycle ;
%Shape: Circle [id:dp11221956174907122] 
\draw  [color=myblack  ,draw opacity=1 ][fill={rgb, 255:red, 255; green, 255; blue, 255 }  ,fill opacity=1 ][line width=3]  (270,285) .. controls (270,276.72) and (276.72,270) .. (285,270) .. controls (293.28,270) and (300,276.72) .. (300,285) .. controls (300,293.28) and (293.28,300) .. (285,300) .. controls (276.72,300) and (270,293.28) .. (270,285) -- cycle ;
%Shape: Circle [id:dp4345523891066313] 
\draw  [color=myblack  ,draw opacity=1 ][fill={rgb, 255:red, 255; green, 255; blue, 255 }  ,fill opacity=1 ][line width=3]  (120,185) .. controls (120,176.72) and (126.72,170) .. (135,170) .. controls (143.28,170) and (150,176.72) .. (150,185) .. controls (150,193.28) and (143.28,200) .. (135,200) .. controls (126.72,200) and (120,193.28) .. (120,185) -- cycle ;
%Shape: Circle [id:dp1573965799231728] 
\draw  [color=myblack  ,draw opacity=1 ][fill={rgb, 255:red, 255; green, 255; blue, 255 }  ,fill opacity=1 ][line width=3]  (220,185) .. controls (220,176.72) and (226.72,170) .. (235,170) .. controls (243.28,170) and (250,176.72) .. (250,185) .. controls (250,193.28) and (243.28,200) .. (235,200) .. controls (226.72,200) and (220,193.28) .. (220,185) -- cycle ;
%Shape: Circle [id:dp39702060276476947] 
\draw  [color=border  ,draw opacity=1 ][fill={rgb, 255:red, 255; green, 255; blue, 255 }  ,fill opacity=1 ][dash pattern={on 3.38pt off 3.27pt}][line width=3]  (170,85) .. controls (170,76.72) and (176.72,70) .. (185,70) .. controls (193.28,70) and (200,76.72) .. (200,85) .. controls (200,93.28) and (193.28,100) .. (185,100) .. controls (176.72,100) and (170,93.28) .. (170,85) -- cycle ;
%Shape: Circle [id:dp01297082093593771] 
\draw  [color=border  ,draw opacity=1 ][fill={rgb, 255:red, 255; green, 255; blue, 255 }  ,fill opacity=1 ][dash pattern={on 3.38pt off 3.27pt}][line width=3]  (70,85) .. controls (70,76.72) and (76.72,70) .. (85,70) .. controls (93.28,70) and (100,76.72) .. (100,85) .. controls (100,93.28) and (93.28,100) .. (85,100) .. controls (76.72,100) and (70,93.28) .. (70,85) -- cycle ;
%Shape: Circle [id:dp4122806456108977] 
\draw  [color=myblack  ,draw opacity=1 ][fill={rgb, 255:red, 255; green, 255; blue, 255 }  ,fill opacity=1 ][line width=3]  (270,85) .. controls (270,76.72) and (276.72,70) .. (285,70) .. controls (293.28,70) and (300,76.72) .. (300,85) .. controls (300,93.28) and (293.28,100) .. (285,100) .. controls (276.72,100) and (270,93.28) .. (270,85) -- cycle ;

% Text Node
\draw (81,242.4) node [anchor=north west][inner sep=0.75pt]  [font=\sizeFtwo]  {$\mathbf{t_{0} .x_{0}}$};
% Text Node
\draw (201,252.4) node [anchor=north west][inner sep=0.75pt]  [font=\sizeFtwo]  {$\mathbf{t_{1} .x_{1}}$};
% Text Node
\draw (301,252.4) node [anchor=north west][inner sep=0.75pt]  [font=\sizeFtwo]  {$\mathbf{t_{2} .x_{2}}$};
% Text Node
\draw (141,145.4) node [anchor=north west][inner sep=0.75pt]  [font=\sizeFtwo]  {$\mathbf{t_{4} .x_{4}}$};
% Text Node
\draw (261,172.4) node [anchor=north west][inner sep=0.75pt]  [font=\sizeFtwo]  {$\mathbf{t_{3} .x_{3}}$};
% Text Node
\draw (102,88.4) node [anchor=north west][inner sep=0.75pt]  [font=\sizeFtwo]  {$\mathbf{t_{5} .x_{5}}$};
% Text Node
\draw (202,88.4) node [anchor=north west][inner sep=0.75pt]  [font=\sizeFtwo]  {$\mathbf{t_{6} .x_{6}}$};
% Text Node
\draw (301,95.4) node [anchor=north west][inner sep=0.75pt]  [font=\sizeFtwo]  {$\mathbf{t_{7} .x_{7}}$};
% Text Node
\draw (279,242.4) node [anchor=north west][inner sep=0.75pt]  [font=\sizeFtwop,color={rgb, 255:red, 155; green, 155; blue, 155 }  ,opacity=1 ]  {$:a$};
% Text Node
\draw (201,242.4) node [anchor=north west][inner sep=0.75pt]  [font=\sizeFtwop,color={rgb, 255:red, 155; green, 155; blue, 155 }  ,opacity=1 ]  {$:a$};
% Text Node
\draw (102,288.4) node [anchor=north west][inner sep=0.75pt]  [font=\sizeFtwop,color={rgb, 255:red, 155; green, 155; blue, 155 }  ,opacity=1 ]  {$:a$};
% Text Node
\draw (99,152.4) node [anchor=north west][inner sep=0.75pt]  [font=\sizeFtwop,color={rgb, 255:red, 155; green, 155; blue, 155 }  ,opacity=1 ]  {$:a$};
% Text Node
\draw (69,112.4) node [anchor=north west][inner sep=0.75pt]  [font=\sizeFtwop,color={rgb, 255:red, 155; green, 155; blue, 155 }  ,opacity=1 ]  {$:a$};
% Text Node
\draw (169,112.4) node [anchor=north west][inner sep=0.75pt]  [font=\sizeFtwop,color={rgb, 255:red, 155; green, 155; blue, 155 }  ,opacity=1 ]  {$:a$};
% Text Node
\draw (279,112.4) node [anchor=north west][inner sep=0.75pt]  [font=\sizeFtwop,color={rgb, 255:red, 155; green, 155; blue, 155 }  ,opacity=1 ]  {$:a$};
% Text Node
\draw (249,152.4) node [anchor=north west][inner sep=0.75pt]  [font=\sizeFtwop,color={rgb, 255:red, 155; green, 155; blue, 155 }  ,opacity=1 ]  {$:a$};
% Text Node
\draw (251,202.4) node [anchor=north west][inner sep=0.75pt]  [font=\sizeFtwop,color={rgb, 255:red, 155; green, 155; blue, 155 }  ,opacity=1 ]  {$:b$};
% Text Node
\draw (199,202.4) node [anchor=north west][inner sep=0.75pt]  [font=\sizeFtwop,color={rgb, 255:red, 155; green, 155; blue, 155 }  ,opacity=1 ]  {$:c$};
% Text Node
\draw (137,288.4) node [anchor=north west][inner sep=0.75pt]  [font=\sizeFtwop,color={rgb, 255:red, 155; green, 155; blue, 155 }  ,opacity=1 ]  {$:b$};
% Text Node
\draw (151,202.4) node [anchor=north west][inner sep=0.75pt]  [font=\sizeFtwop,color={rgb, 255:red, 155; green, 155; blue, 155 }  ,opacity=1 ]  {$:b$};
% Text Node
\draw (161,162.4) node [anchor=north west][inner sep=0.75pt]  [font=\sizeFtwop,color={rgb, 255:red, 155; green, 155; blue, 155 }  ,opacity=1 ]  {$:c$};
% Text Node
\draw (198,152.4) node [anchor=north west][inner sep=0.75pt]  [font=\sizeFtwop,color={rgb, 255:red, 155; green, 155; blue, 155 }  ,opacity=1 ]  {$:d$};
% Text Node
\draw (151,252.4) node [anchor=north west][inner sep=0.75pt]  [font=\sizeFtwop,color={rgb, 255:red, 155; green, 155; blue, 155 }  ,opacity=1 ]  {$:c$};

\end{tikzpicture}}\\
    \(G\)
\end{minipage}
\hfil
\begin{minipage}[b]{0.42\textwidth}
    \centering
    \resizebox{.5\textwidth}{!}{\tikzset{every picture/.style={line width=0.75pt}} %set default line width to 0.75pt        

\begin{tikzpicture}[x=0.75pt,y=0.75pt,yscale=-1,xscale=1]
%uncomment if require: \path (0,312); %set diagram left start at 0, and has height of 312

%Straight Lines [id:da25778108431329194] 
\draw [color=border  ,draw opacity=1 ][line width=2.25]  [dash pattern={on 2.53pt off 3.02pt}]  (222,185) -- (157,185) ;
\draw [shift={(152,185)}, rotate = 360] [fill=border  ,fill opacity=1 ][line width=0.08]  [draw opacity=0] (14.29,-6.86) -- (0,0) -- (14.29,6.86) -- cycle    ;
%Straight Lines [id:da5330872910763639] 
\draw [color=border  ,draw opacity=1 ][line width=2.25]  [dash pattern={on 2.53pt off 3.02pt}]  (237,170) -- (189.91,104.07) ;
\draw [shift={(187,100)}, rotate = 54.46] [fill=border  ,fill opacity=1 ][line width=0.08]  [draw opacity=0] (14.29,-6.86) -- (0,0) -- (14.29,6.86) -- cycle    ;
%Straight Lines [id:da739612951368901] 
\draw [line width=2.25]    (287,270) -- (239.91,204.07) ;
\draw [shift={(237,200)}, rotate = 54.46] [fill=myblack  ][line width=0.08]  [draw opacity=0] (14.29,-6.86) -- (0,0) -- (14.29,6.86) -- cycle    ;
%Straight Lines [id:da39963266218894833] 
\draw [color=border  ,draw opacity=1 ][line width=2.25]  [dash pattern={on 2.53pt off 3.02pt}]  (187,270) -- (234.09,204.07) ;
\draw [shift={(237,200)}, rotate = 125.54] [fill=border  ,fill opacity=1 ][line width=0.08]  [draw opacity=0] (14.29,-6.86) -- (0,0) -- (14.29,6.86) -- cycle    ;
%Straight Lines [id:da5285703650861084] 
\draw [line width=2.25]    (237,170) -- (284.09,104.07) ;
\draw [shift={(287,100)}, rotate = 125.54] [fill=myblack  ][line width=0.08]  [draw opacity=0] (14.29,-6.86) -- (0,0) -- (14.29,6.86) -- cycle    ;
%Shape: Circle [id:dp8817091719855964] 
\draw  [color=border  ,draw opacity=1 ][fill={rgb, 255:red, 255; green, 255; blue, 255 }  ,fill opacity=1 ][dash pattern={on 3.38pt off 3.27pt}][line width=3]  (172,285) .. controls (172,276.72) and (178.72,270) .. (187,270) .. controls (195.28,270) and (202,276.72) .. (202,285) .. controls (202,293.28) and (195.28,300) .. (187,300) .. controls (178.72,300) and (172,293.28) .. (172,285) -- cycle ;
%Shape: Circle [id:dp7569884782316096] 
\draw  [color=myblack  ,draw opacity=1 ][fill={rgb, 255:red, 255; green, 255; blue, 255 }  ,fill opacity=1 ][line width=3]  (272,285) .. controls (272,276.72) and (278.72,270) .. (287,270) .. controls (295.28,270) and (302,276.72) .. (302,285) .. controls (302,293.28) and (295.28,300) .. (287,300) .. controls (278.72,300) and (272,293.28) .. (272,285) -- cycle ;
%Shape: Circle [id:dp24488145344431256] 
\draw  [color=border  ,draw opacity=1 ][fill={rgb, 255:red, 255; green, 255; blue, 255 }  ,fill opacity=1 ][dash pattern={on 3.38pt off 3.27pt}][line width=3]  (122,185) .. controls (122,176.72) and (128.72,170) .. (137,170) .. controls (145.28,170) and (152,176.72) .. (152,185) .. controls (152,193.28) and (145.28,200) .. (137,200) .. controls (128.72,200) and (122,193.28) .. (122,185) -- cycle ;
%Shape: Circle [id:dp8886770338580505] 
\draw  [color=myblack  ,draw opacity=1 ][fill={rgb, 255:red, 255; green, 255; blue, 255 }  ,fill opacity=1 ][line width=3]  (222,185) .. controls (222,176.72) and (228.72,170) .. (237,170) .. controls (245.28,170) and (252,176.72) .. (252,185) .. controls (252,193.28) and (245.28,200) .. (237,200) .. controls (228.72,200) and (222,193.28) .. (222,185) -- cycle ;
%Shape: Circle [id:dp15265138264901146] 
\draw  [color=border  ,draw opacity=1 ][fill={rgb, 255:red, 255; green, 255; blue, 255 }  ,fill opacity=1 ][dash pattern={on 3.38pt off 3.27pt}][line width=3]  (172,85) .. controls (172,76.72) and (178.72,70) .. (187,70) .. controls (195.28,70) and (202,76.72) .. (202,85) .. controls (202,93.28) and (195.28,100) .. (187,100) .. controls (178.72,100) and (172,93.28) .. (172,85) -- cycle ;
%Shape: Circle [id:dp5004182012597886] 
\draw  [color=myblack  ,draw opacity=1 ][fill={rgb, 255:red, 255; green, 255; blue, 255 }  ,fill opacity=1 ][line width=3]  (272,85) .. controls (272,76.72) and (278.72,70) .. (287,70) .. controls (295.28,70) and (302,76.72) .. (302,85) .. controls (302,93.28) and (295.28,100) .. (287,100) .. controls (278.72,100) and (272,93.28) .. (272,85) -- cycle ;

% Text Node
\draw (203,252.4) node [anchor=north west][inner sep=0.75pt]  [font=\sizeFtwo]  {$\mathbf{t_{1} .x_{1}}$};
% Text Node
\draw (303,252.4) node [anchor=north west][inner sep=0.75pt]  [font=\sizeFtwo]  {$\mathbf{t_{2} .x_{2}}$};
% Text Node
\draw (141,142.4) node [anchor=north west][inner sep=0.75pt]  [font=\sizeFtwo]  {$\mathbf{t_{4} .x_{4}}$};
% Text Node
\draw (263,172.4) node [anchor=north west][inner sep=0.75pt]  [font=\sizeFtwo]  {$\mathbf{t_{3} .x_{3}}$};
% Text Node
\draw (204,88.4) node [anchor=north west][inner sep=0.75pt]  [font=\sizeFtwo]  {$\mathbf{t_{6} .x_{6}}$};
% Text Node
\draw (303,95.4) node [anchor=north west][inner sep=0.75pt]  [font=\sizeFtwo]  {$\mathbf{t_{7} .x_{7}}$};
% Text Node
\draw (281,242.4) node [anchor=north west][inner sep=0.75pt]  [font=\sizeFtwop,color={rgb, 255:red, 155; green, 155; blue, 155 }  ,opacity=1 ]  {$:a$};
% Text Node
\draw (203,242.4) node [anchor=north west][inner sep=0.75pt]  [font=\sizeFtwop,color={rgb, 255:red, 155; green, 155; blue, 155 }  ,opacity=1 ]  {$:a$};
% Text Node
\draw (171,112.4) node [anchor=north west][inner sep=0.75pt]  [font=\sizeFtwop,color={rgb, 255:red, 155; green, 155; blue, 155 }  ,opacity=1 ]  {$:a$};
% Text Node
\draw (281,112.4) node [anchor=north west][inner sep=0.75pt]  [font=\sizeFtwop,color={rgb, 255:red, 155; green, 155; blue, 155 }  ,opacity=1 ]  {$:a$};
% Text Node
\draw (251,152.4) node [anchor=north west][inner sep=0.75pt]  [font=\sizeFtwop,color={rgb, 255:red, 155; green, 155; blue, 155 }  ,opacity=1 ]  {$:a$};
% Text Node
\draw (253,202.4) node [anchor=north west][inner sep=0.75pt]  [font=\sizeFtwop,color={rgb, 255:red, 155; green, 155; blue, 155 }  ,opacity=1 ]  {$:b$};
% Text Node
\draw (201,202.4) node [anchor=north west][inner sep=0.75pt]  [font=\sizeFtwop,color={rgb, 255:red, 155; green, 155; blue, 155 }  ,opacity=1 ]  {$:c$};
% Text Node
\draw (163,162.4) node [anchor=north west][inner sep=0.75pt]  [font=\sizeFtwop,color={rgb, 255:red, 155; green, 155; blue, 155 }  ,opacity=1 ]  {$:c$};
% Text Node
\draw (200,152.4) node [anchor=north west][inner sep=0.75pt]  [font=\sizeFtwop,color={rgb, 255:red, 155; green, 155; blue, 155 }  ,opacity=1 ]  {$:d$};

\end{tikzpicture}}\\
    \(G_{\{x_2,x_3,x_7\}}\)
\end{minipage}
\caption{A graph $G$ (left) and its induced subgraph
$G_{\{x_2,x_3,x_7\}}$ (right). Border vertices and border edges are
dashed. Note how internal vertices of $G_{\{x_2,x_3,x_7\}}$ are exactly $x_2$, $x_3$, and $x_7$. Its border vertices are vertices of the original graph (internal and border) in the immediate vaccinity of $x_2$ ,$x_3$, and $x_7$.}
\label{fig:induced-subgraph}
\end{figure}

Second, the partial composition \(G\sqcup H\) is
defined when both are induced subgraphs of a common graph, and is then their
least common supergraph; in particular \(G=G_X\sqcup G_{\overline X}\) for
every \(X\subseteq\calX\).
\ifnamealgebra
With the name algebra, the position clause of that lemma is the
non-overlapping condition above: sharing a position fragment forces the two
vertices to coincide.
\else
\fi
Applications usually restrict to a family of graphs \(\sshift\subseteq\calG\),
which must be closed under the operations used throughout the paper.

Finally, names only serve as a description tool; their particular
choice in a graph should be immaterial. We therefore consider renamings, and
will require considered subset of graphs, neighbourhood schemes and local rules to commute with them.

\begin{definition}[Renaming]\label{def:renaming}
A renaming is a function $R:\calV\rightarrow \calV$ such that
$R(t.u)=t.R(u)$ and whose restriction to positions $\sp{R(0.-)} : \X \to \X$ is a bijection of $\X$.
It is fully specified by this restriction, and acts on the graphs of
Def.~\ref{def:graphs} by renaming their vertices.
\end{definition}

Based on these operations, one can restrict evolutions to a subset of graphs, as long as appropriate closure properties preserve their semantics.

\begin{definition}[Closed subset of graphs]\label{def:closed}
A subset of graphs $\sshift\subseteq\calG$ is \emph{closed} if it respects the following three properties.
\begin{description}[leftmargin=!,labelwidth=3.4cm,itemsep=.25ex]
\item[renaming] If $G\in \sshift$ implies $R(G)\in \sshift$ where $R$ is a
renaming.
\item[inclusion] If $G\in \sshift$ implies
$\forall H\sqsubseteq G,\ H\in \sshift$.
\item[disjoint-union] If $G,H\in \sshift$ and
$\posdisjoint{\V_G}{\V_H}$ implies $G\sqcup H\in \sshift$.
\end{description}
\end{definition}

These are the standard hypotheses of~\cite{arrighi2026causalgraphrewriting}.
From now on we fix a closed subset of graphs \(\sshift\), relative to which
every statement below is made.

\subsection{Neighbourhood schemes and local rules}\label{sec:local-rules}

A neighbourhood scheme returns the positions considered close to a position
\(x\). These positions are not arbitrary: they form a \emph{cone} of \(x\),
much as an event in relativity can only signal within its future light-cone,
or as a process sends its result to the processes that depend on it.

\begin{definition}[Cones]\label{def:cones}
A \emph{cone} of $x$ is a graph $C\in\sshift$ such that
$x\in\sp{\Past(C)}$ and, for every $v\in\I_C$, there exists a directed path
from the vertex at position $x$ to $v$ that lies entirely in $\I_C$, a.k.a.\
\emph{accessibility}. We denote by $\xCone$ the set of cones of $x$, and by
$\xcone{}$ the set of all cones.
\end{definition}

\begin{definition}[(Extensive) neighbourhood scheme]\label{def:neighbourhood}
An (extensive) \emph{neighbourhood scheme} $\restri{}$ is a function which maps
any position $x\in\calX$ to a function
$\restri{x}:\graphvalidexplicit{\restri{}}{x}\to\mathcal P(\calX)$ on some
domain $\graphvalidexplicit{\restri{}}{x}\subseteq\sshift$ such that for any
$H\in\sshift$ and $G\in\graphvalidexplicit{\restri{}}{x}$ we have:
\begin{description}[leftmargin=!,labelwidth=2.8cm,itemsep=.25ex]
\item[cone:] $\restr{x}{G}\subseteq\sp{\I_G}$ and
$G_{\restri{x}}$ is in $\xCone$.
\item[stability:] $G_{\restri{x}}\sqsubseteq H$ implies
$H\in\graphvalid{\restri{}}{x}$.
\item[extensivity:] $G_{\restri{x}}\sqsubseteq H\sqsubseteq G$ implies
$G_{\restri{x}}=H_{\restri{x}}$.
\item[renaming-invariance:] for all renaming $R$, we have
\[
R(G)\in\graphvalidexplicit{\restri{}}{\sp{R(x)}}
\quad\text{and}\quad
\restri{\sp{R(x)}}(R(G))=\sp{R(\restri{x}(G))}.
\]
\end{description}
where we introduced $G_{\restri{x}}$ as a shorthand for
$G_{\restr{x}{G}}$. We denote
$\diskA{x}:=\{G_{\restri{x}}\mid
G\in\graphvalidexplicit{\restri{}}{x}\}$ the \emph{set of disks} of $x$.
\end{definition}

The disks \(\diskA{x}\) are the neighbourhoods that can be extracted from
graphs of \(\sshift\); they are the building blocks on which local rules are
defined. By extensivity, the neighbourhood scheme leaves them unchanged
(Rem.~\ref{rem:idempotence}).

\begin{remark}[Extensivity implies idempotence]\label{rem:idempotence}
For any disk $G_{\restri{x}}\in\diskA{x}$, we have
$(G_{\restri{x}})_{\restri{x}}=G_{\restri{x}}$. Indeed, stability places
$G_{\restri{x}}$ in $\graphvalid{\restri{}}{x}$, and extensivity applied to
$G_{\restri{x}}\sqsubseteq G_{\restri{x}}\sqsubseteq G$ then gives
$\restr{x}{G_{\restri{x}}}=\restr{x}{G}$.
\end{remark}

A local rule \(A_{(-)}\) acts at \(x\) by replacing the disk
\(G_{\restri{x}}\) with \(A_xG_{\restri{x}}\), leaving the rest of the graph
unchanged.

\begin{definition}[Local rule]\label{def:locality}
Given some neighbourhood scheme $\restri{}$, an
\emph{$\restri{}$-local operator} $A_{(-)}$ is an operator from
$\mathcal{X}\times\sshift$ to $\sshift$ such that $\forall x\in\X$ and
$\forall G\in\sshift$,
$A_{\sp{R(x)}}(R(G))=R(A_xG)$ for every renaming $R$, and
\[
A_xG:=
\begin{cases}
(A_xG_{\restri{x}})\sqcup G_{\crestri{x}}
&\textrm{if $G\in\graphvalid{\restri{}}{x}$},\\
G&\textrm{otherwise,}
\end{cases}
\]
where the action of $A_x$ on $G_{\restri{x}}$ must preserve the
border---i.e.\ $\B_{G_{\restri{x}}}=\B_{A_xG_{\restri{x}}}$.
\end{definition}

By idempotence, \(A_x\) is determined by its action on disks.

\section{Borders as missing information}\label{sec:strong}

We want to read border vertices as missing information, and nothing more: they record
edges whose continuation has not been explored, not the end of the graph.
To a large extend, this is already the case with the above definition.

But taking this reading seriously requires two adjustments of the framework.
The first concerns the subset of graphs: missing information should always be possible to fill in.

\begin{definition}[Border-completable subset of graphs]\label{def:completable}
  A subset of graphs $\sshift$ is \emph{border-completable} when every
  $G\in\sshift$ has a \emph{completion}, i.e.\ some $H\in\sshift$ such that
  $\V_G\subseteq \I_H$ and $G\sqsubseteq H$.
\end{definition}

From now on, $\sshift$ is assumed to be border-completable. The second adjustment
concerns neighbourhoods: whatever lies beyond the border vertices of a disk,
the neighbourhood should stay the same.

\begin{definition}[Border-agnostic neighborhood scheme]\label{def:strong-ext}
  A neighbourhood scheme $\restri{}$ is \emph{border-agnostic} if and only if for all
  $G\in\graphvalid{\restri{}}{x}$, for all $H\in\sshift$, we have that
  $G_{\restri{x}}\sqsubseteq H$ implies
  $H_{\restri{x}}=G_{\restri{x}}$.
\end{definition}

This condition looks close to extensivity of Def.~\ref{def:neighbourhood}, but extensivity only compares a
graph with its subgraphs, whereas border-agnosticism compares a disk with
every graph that contains it. The example below show that the
difference matters: an extensive neighbourhood may depend on information
added beyond its borders (Sec.~\ref{sec:added-info}), and may have disks
that are selected only when their borders are borders of the whole graph,
on which local rules may behave arbitrarily (Sec.~\ref{sec:structural}).

This example and the following one use the same border-completable closed subset of graphs already considered in~\cite{ArrighiIJUC}. Take ports
$\pi=\{l,r\}$, standing for the left and right directions, and let $L$ be the
set of infinite \emph{line} graphs: every edge goes from a port $:\!l$ to a
port $:\!r$ or from $:\!r$ to $:\!l$, so that the vertices form an infinite
line without borders. We let $\sshift$ be the smallest closed subset of
graphs containing $L$; it is border-completable, since the borders of its graphs
can always be completed into lines.

\subsection{Stability under added information}\label{sec:added-info}

A first issue with neighbourhoods that are not border-agnostic is that
completing a disk may change the neighbourhood.

\begin{example}[Rounded neighbourhood]\label{ex:rounded}
  Consider the following neighbourhood scheme. $\restri{x}$ attempts to select all vertices in the
  cone in front of $x$, but only keeps an even number of vertices (other than
  $x$) in each direction (left and right). For instance, in the graph $G$
  shown in Fig.~\ref{fig:rounded}, $\restri{x}$ selects only $x$, since there
  are not enough vertices on the right to form an even number larger than
  $0$. In contrast, on the larger graph $H$, $\restri{x}$ can select two
  vertices on the right. This neighbourhood is extensive but not
  border-agnostic: $G_{\restri{x}}\sqsubseteq H$, but
  $H_{\restri{x}}\ne G_{\restri{x}}$.
\end{example}

\begin{figure}[t]
  \centering
  \begin{minipage}[b]{.23\textwidth}\centering
  \includegraphics[width=\linewidth]{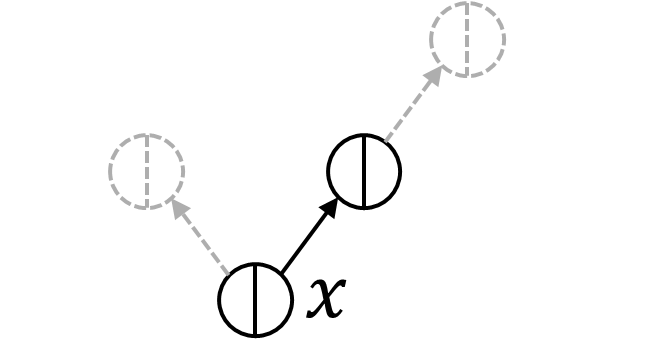}\\[-.4ex]\small \(G\)
  \end{minipage}\hfill
  \begin{minipage}[b]{.23\textwidth}\centering
  \includegraphics[width=\linewidth]{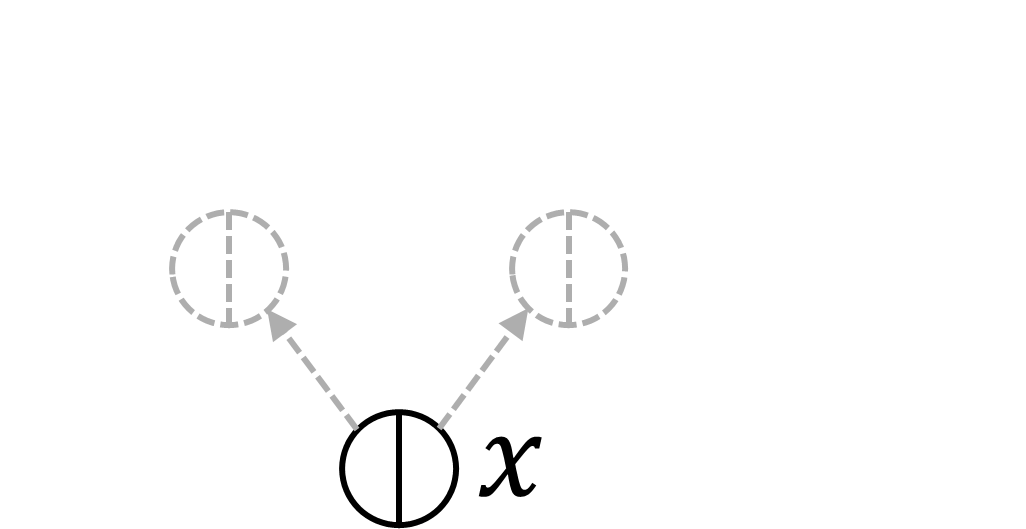}\\[-.4ex]\small
  \(G_{\restri{x}}\)
  \end{minipage}\hfill
  \begin{minipage}[b]{.23\textwidth}\centering
  \includegraphics[width=\linewidth]{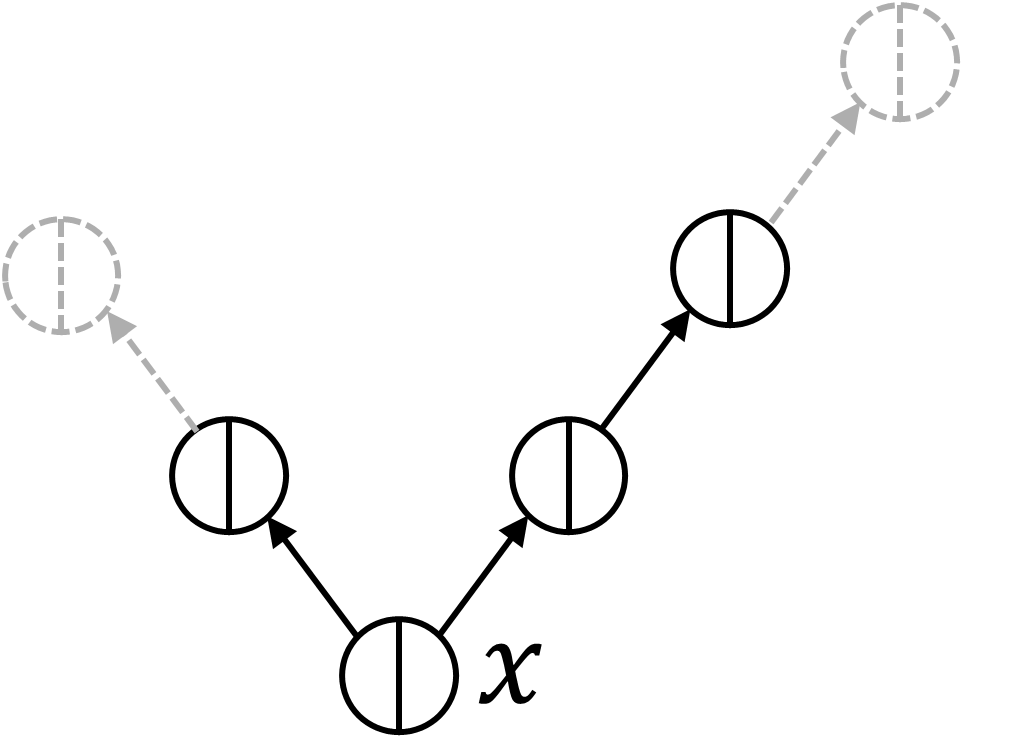}\\[-.4ex]\small \(H\)
  \end{minipage}\hfill
  \begin{minipage}[b]{.23\textwidth}\centering
  \includegraphics[width=\linewidth]{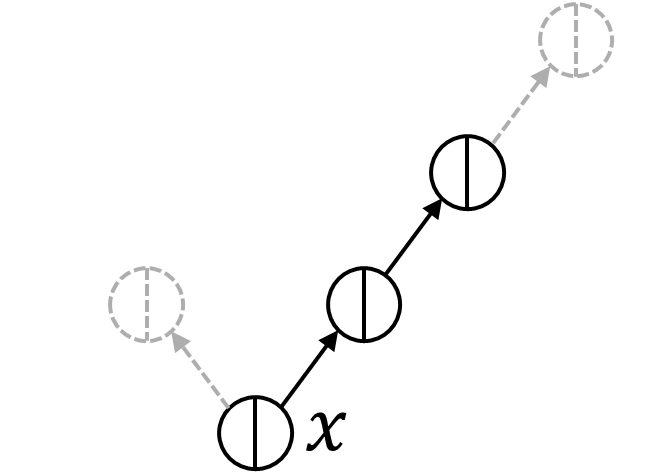}\\[-.4ex]\small
  \(H_{\restri{x}}\)
  \end{minipage}
  \caption{The rounded neighbourhood of Ex.~\ref{ex:rounded}, is 
  not border-agnostic. In $G$, $x$ has a single internal vertex on its right,
  which is therefore not selected: $G_{\restri{x}}$ only contains $x$. In
  $H$, which contains $G_{\restri{x}}$, there are two internal vertices on
  the right of $x$, and both are selected, whereas the single one on its left
  is not. Dashed vertices are borders. We do not represent ports here, but they allow us to distinguish between left and right neighbours.}
  \label{fig:rounded}
\end{figure}

In Ex.~\ref{ex:rounded}, completing $G_{\restri{x}}$ into $H$ changes the
neighbourhood: $H_{\restri{x}}$ selects more vertices, so the neighbourhood
computed on $G$ depended on information that was missing there.
Put differently, once $H_{\restri{x}}$ is known to be a neighbourhood, a
smaller graph $G$ that contains only part of it cannot know what its border
vertices hide, so the neighbourhood should remain undefined on $G$. We call
this requirement the \emph{safety principle}, and show that it is equivalent
to border-agnosticism.

\begin{definition}[Safety principle]\label{def:safety}
  An extensive neighbourhood scheme \(\restri{}\) obeys the \emph{safety
  principle} if, for all \(G\in\graphvalid{\restri{}}{x}\) and
  all \(H\sqsubseteq G\),
  \[
  G_{\restri{x}}\not\sqsubseteq H
  \quad\Longrightarrow\quad
  H\notin\graphvalid{\restri{}}{x}.
  \]
  That is, if the neighbourhood computed on a larger graph is not yet present
  in the smaller graph, then the smaller graph is not enough for computing the
  neighbourhood.
  \end{definition}
  
  \begin{proposition}[Border-agnosticism is equivalent to the safety principle]
  \label{prop:safety}
  An extensive neighbourhood scheme $\restri{}$ is border-agnostic if and
  only if it obeys the safety principle.
  \end{proposition}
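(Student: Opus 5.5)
The equivalence splits into two implications, each proved by contraposition or directly from the definitions, with stability and extensivity of Def.~\ref{def:neighbourhood} as the only tools.

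\emph{Border-agnosticism implies safety.} Assume $\restri{}$ is border-agnostic, and let $G\in\graphvalid{\restri{}}{x}$ and $H\sqsubseteq G$. We argue by contraposition: suppose $H\in\graphvalid{\restri{}}{x}$, and we show $G_{\restri{x}}\sqsubseteq H$. Since $H$ is in the domain, its disk $H_{\restri{x}}$ is defined, and $H_{\restri{x}}\sqsubseteq H\sqsubseteq G$ with $G\in\sshift$. Applying border-agnosticism to the disk of $H$ and the larger graph $G$ gives $G_{\restri{x}}=H_{\restri{x}}$, hence $G_{\restri{x}}\sqsubseteq H$. The only routine points are that induced subgraphs compose transitively, so $H_{\restri{x}}\sqsubseteq G$, and that $\sqsubseteq$ is what Def.~\ref{def:strong-ext} requires.

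\emph{Safety implies border-agnosticism.} Assume the safety principle. Let $G\in\graphvalid{\restri{}}{x}$ and $H\in\sshift$ with $G_{\restri{x}}\sqsubseteq H$; we must show $H_{\restri{x}}=G_{\restri{x}}$.
\begin{enumerate}
\item By stability, $H\in\graphvalid{\restri{}}{x}$, so $H_{\restri{x}}$ is defined.
\item Put $D:=G_{\restri{x}}$. By Rem.~\ref{rem:idempotence}, $D\in\graphvalid{\restri{}}{x}$ and $D_{\restri{x}}=D$.
\item Apply extensivity to $H$ with the intermediate graph $K:=H_{\sp{\V_D}\cap\sp{\I_H}}$, or simply to $D\sqsubseteq H$ itself, to compare $H_{\restri{x}}$ with $D_{\restri{x}}$.
\end{enumerate}
Extensivity needs the hypothesis $H_{\restri{x}}\sqsubseteq D\sqsubseteq H$. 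This is where safety enters. We have $D\sqsubseteq H$ and $D\in\graphvalid{\restri{}}{x}$, so safety applied to the pair $(H,D)$ says that if $H_{\restri{x}}\not\sqsubseteq D$, then $D\notin\graphvalid{\restri{}}{x}$, which is a contradiction. Hence $H_{\restri{x}}\sqsubseteq D\sqsubseteq H$. Extensivity then yields $H_{\restri{x}}=D_{\restri{x}}=D=G_{\restri{x}}$, as required.

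The main obstacle is the second direction, specifically making sure that $D=G_{\restri{x}}$ is itself a valid input (via stability and Rem.~\ref{rem:idempotence}). Safety has to be applied with $D$ playing the role of the smaller graph inside $H$, rather than $G$, since $G$ need not be comparable to $H$. One should also check that ``$H_{\restri{x}}\sqsubseteq D$'' is the exact form of the extensivity premise, i.e.\ that an induced subgraph of $H$ contained in the induced subgraph $D$ of $H$ is an induced subgraph of $D$. This is immediate from the definition of $G_U$, since both are restrictions of $H$ to position sets with $\restr{x}{H}\subseteq\sp{\I_D}$.
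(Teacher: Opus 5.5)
Your proof is correct and matches the paper's argument step for step. For the first direction you use contraposition plus border-agnosticism applied to $H_{\restri{x}}\sqsubseteq H\sqsubseteq G$; for the converse you apply safety to $G_{\restri{x}}\sqsubseteq H$ (with $G_{\restri{x}}$ in the domain by stability) to get $H_{\restri{x}}\sqsubseteq G_{\restri{x}}\sqsubseteq H$, then finish with extensivity and idempotence. The alternative intermediate graph $K$ in your step 3 and the closing remark about induced subgraphs are unnecessary, since safety already gives $H_{\restri{x}}\sqsubseteq G_{\restri{x}}$ directly.
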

  \begin{proof}
  \emph{Border-agnosticism implies the safety principle.}
  We prove the contrapositive. Suppose
  $H\in\graphvalid{\restri{}}{x}$ and $H\sqsubseteq G$. Then
  border-agnosticism gives
  $H_{\restri{x}}=G_{\restri{x}}$, which implies
  $G_{\restri{x}}\sqsubseteq H$.
  
  \emph{The safety principle implies border-agnosticism.}
  Let
  $G\in\graphvalid{\restri{}}{x}$ and $H\in\sshift$ such that
  $G_{\restri{x}}\sqsubseteq H$. Stability makes $H_{\restri{x}}$ defined.
  Apply the contrapositive of safety to the inclusion
  $G_{\restri{x}}\sqsubseteq H$. Since $G_{\restri{x}}$ is in the domain by
  stability, it follows that
  $H_{\restri{x}}\sqsubseteq G_{\restri{x}}$. Hence
  $H_{\restri{x}}\sqsubseteq G_{\restri{x}}\sqsubseteq H$, and extensivity
  applied to $H$ gives
  $H_{\restri{x}}=(G_{\restri{x}})_{\restri{x}}$. Idempotence concludes that
  $H_{\restri{x}}=G_{\restri{x}}$.
  \qed
  \end{proof}

The rounded neighbourhood violates precisely this principle: its stopping
decision treats a border as a wall, although the border only denotes missing
information.

\subsection{Structural consequences}\label{sec:structural}

The previous section shows how to translate the desire to treat borders as missing information into formal contraints guaranteeing stability under addition of information.
The first constraints is on closed subset of graphs that need to be border-completable, and the second one is on neighborhood schemes that need to be border-agnostic.
Under these constraints, the framework acquire more structure around the notion of disks.
The first two properties that we discuss now are disk uniqueness and disk-completability.
Together, they ultimately lead to the study, in the next sections, of local rules from the point of view of rewrite systems.

\paragraph{Disk uniqueness} states that a graph contains at most one disk at each position, which is readily crucial to the notion of matching a left hand side graph as required by rewrite systems.
More interestingly, under border-completion, disk uniqueness is actually equivalent to border-agnosticism.

\begin{lemma}\label{rem:disk-unique}
Border-agnosticism is equivalent to disk uniqueness, i.e. for any border-agnostic neighbourhood scheme $\restri{}$ and graph
$G\in\graphvalid{\restri{}}{x}$, there exists at most one disk
$D\in\diskA{x}$ such that $D\sqsubseteq G$.
\end{lemma}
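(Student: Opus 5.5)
We prove the two directions separately. Disk uniqueness is read for an arbitrary extensive neighbourhood scheme: for every $G\in\graphvalid{\restri{}}{x}$ there is at most one $D\in\diskA{x}$ with $D\sqsubseteq G$.

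\emph{Border-agnosticism implies disk uniqueness.} Let $G\in\graphvalid{\restri{}}{x}$ and let $D\in\diskA{x}$ satisfy $D\sqsubseteq G$. By definition of the set of disks, $D=K_{\restri{x}}$ for some $K\in\graphvalid{\restri{}}{x}$. Border-agnosticism applied to $K$ and to the graph $G\in\sshift$ containing $K_{\restri{x}}$ gives $G_{\restri{x}}=K_{\restri{x}}=D$. Hence every disk contained in $G$ equals $G_{\restri{x}}$, so there is at most one. This direction is immediate.

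\emph{Disk uniqueness implies border-agnosticism.} Let $G\in\graphvalid{\restri{}}{x}$ and $H\in\sshift$ with $G_{\restri{x}}\sqsubseteq H$. By stability, $H\in\graphvalid{\restri{}}{x}$, so $H_{\restri{x}}$ is defined and belongs to $\diskA{x}$. Since $H_{\restri{x}}$ is an induced subgraph of $H$, we have $H_{\restri{x}}\sqsubseteq H$. Also $G_{\restri{x}}\in\diskA{x}$, and $G_{\restri{x}}\sqsubseteq H$ by hypothesis. So $H$ contains two disks of $x$, and disk uniqueness forces $H_{\restri{x}}=G_{\restri{x}}$, which is exactly border-agnosticism. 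Alternatively, one could route this through Prop.~\ref{prop:safety}, but the direct argument is shorter.

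\emph{Main obstacle.} The only delicate point is formal: one must check that the disks being compared are genuinely elements of $\diskA{x}$ and that the ambient graph lies in the domain. Stability handles the latter. The former holds by definition of $\diskA{x}$, using Rem.~\ref{rem:idempotence} if one prefers to present a disk as the neighbourhood of itself.

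Border-completability does not appear to be needed for this equivalence. If the intended statement instead restricts uniqueness to graphs of $\sshift$ outside the domain, the same argument shows the condition is vacuous there, since containing a disk places a graph in the domain by stability.
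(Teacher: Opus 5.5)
Your proof is correct and follows the paper's argument in both directions. For the converse you use stability together with $H_{\restri{x}}\sqsubseteq H$, and for the forward direction you use border-agnosticism to force every contained disk to equal $G_{\restri{x}}$; the only cosmetic difference is that you apply border-agnosticism to the witness $K$ with $D=K_{\restri{x}}$, whereas the paper first uses idempotence to write $D=D_{\restri{x}}$.
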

\begin{proof}
For the first direction, suppose border-agnosticism of the neighborhood scheme, and consider two disks
$D^1$ and $D^2$ contained in $G$.
By idempotence we have $D^i_{\restri{x}}=D^i$.
Then border-agnosticism gives $G_{\restri{x}}=D^i$, which proves $D^1=D^2$.

In the other direction, disk uniqueness and the premise of the border-agnonisticism property, namely $G_{\restri{x}} \sqsubseteq H$.
By stability, $\restri{x}(H)$ is defined.
Since $H_{\restri{x}} \sqsubseteq H$, disk uniqueness concludes that $G_{\restri{x}} = H_{\restri{x}}$.
\end{proof}

As an illustration, the reader can check that Ex.~\ref{ex:rounded}, which is not border agnostic, do not have disk uniqueness property.
Indeed, in Fig.~\ref{fig:rounded}, $G_{\restri{x}}$ and $H_{\restri{x}}$ are two different disks included in $H$.

\paragraph{Disk-completability}

In the same way that border-completability asks for every graph to be induced from a bigger graph in which it is completely internal, asking for border-agnocism implies that every disk is a neighborhood at some position of a bigger graph in which it is completely internal.

\begin{lemma}[Border-agnosticism implies disk-completability]\label{lem:disk-completable}
Let \(\restri{}\) be a border-agnostic neighbourhood scheme.
For every disk \(D\in\diskA{x}\) there exists
\(H\in\sshift\) such that \(\V_D\subseteq\I_H\) and \(D=H_{\restri{x}}\).
\end{lemma}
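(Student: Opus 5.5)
The plan is to combine border-completability of \(\sshift\) with border-agnosticism (Def.~\ref{def:strong-ext}).

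Let \(D\in\diskA{x}\), so \(D=G_{\restri{x}}\) for some \(G\in\graphvalid{\restri{}}{x}\). Since \(D\in\sshift\) (the inclusion property, \(D\sqsubseteq G\)), border-completability (Def.~\ref{def:completable}) yields some \(H\in\sshift\) with \(\V_D\subseteq\I_H\) and \(D\sqsubseteq H\). This \(H\) is the candidate witness, and it already satisfies the first requirement \(\V_D\subseteq\I_H\).

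It remains to show \(D=H_{\restri{x}}\). We have \(G\in\graphvalid{\restri{}}{x}\), \(H\in\sshift\), and \(G_{\restri{x}}=D\sqsubseteq H\). Stability puts \(H\) in \(\graphvalid{\restri{}}{x}\), so \(H_{\restri{x}}\) is defined. Border-agnosticism then gives \(H_{\restri{x}}=G_{\restri{x}}=D\) directly. Alternatively, by idempotence (Rem.~\ref{rem:idempotence}) \(D\) is itself in the domain with \(D_{\restri{x}}=D\), and disk uniqueness (Lem.~\ref{rem:disk-unique}) applied to \(D\sqsubseteq H\) and \(H_{\restri{x}}\sqsubseteq H\) gives the same conclusion.

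The only delicate point is checking that the completion is compatible with the induced-subgraph relation: \(D\sqsubseteq H\) is part of the definition of completion, so no extra work is needed. I expect no real obstacle beyond keeping track of which hypothesis (stability versus border-agnosticism) supplies definedness and which supplies equality. The dependence on border-completability, rather than on any property of \(\restri{}\), is the point that Ex.~\ref{ex:neighbours} presumably shows is necessary.
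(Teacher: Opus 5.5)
Your proof is correct and is essentially the paper's: complete \(D\) via border-completability, then conclude \(H_{\restri{x}}=D\) by border-agnosticism. Applying border-agnosticism directly to the original \(G\) even spares you the idempotence step the paper uses to put \(D\) itself in the domain.

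One side remark is off, though. Ex.~\ref{ex:neighbours} takes place in a border-completable \(\sshift\), and its disk fails to be disk-completable because the scheme is not border-agnostic. So it shows that the hypothesis on \(\restri{}\) is needed, not that border-completability is.
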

\begin{proof}
Closure under inclusion of $\sshift$ gives \(D\in\sshift\), and
border-completability gives a completion \(H\in\sshift\) of \(D\), i.e.\
\(D\sqsubseteq H\) and \(\V_D\subseteq\I_H\). Idempotence gives
\(D_{\restri{x}}=D\), and border-agnosticism gives \(H_{\restri{x}}=D\).
\end{proof}

This implication is not an equivalence.
Indeed, Ex.~\ref{ex:rounded} is not border-agnostic but is disk-completable.
In the same vain, the following example shows, however, that not all neighborhood schemes are disk-completable.

This second kind of behaviour only shows up at
\emph{global borders}, i.e.\ borders of the whole graph under consideration,
beyond which nothing is known. Such borders can always be completed, but
some disks are selected only as long as their borders are global: in any
completion, the neighbourhood selects something else. If borders only denote
missing information, these disks are artefacts, and local rules should not
be able to behave specially on them. We illustrate this on the way a local
rule acting on a disk handles its border edges, which causal graph rewriting
does not force to be preserved.

\begin{example}[Neighbours if possible]\label{ex:neighbours}
  Consider the neighbourhood scheme where $\restri{x}$ selects $x$ together
  with those of its two neighbours that are internal vertices of the graph.
  In the graph $G$ of Fig.~\ref{fig:neighbours}, both neighbours of $x$ are
  border vertices, so $\restri{x}$ selects only $x$ and $G_{\restri{x}}=G$.
  In the completion $H$ of $G$, both neighbours are internal, and
  $\restri{x}$ selects them as well.
  This neighbourhood is extensive: if
  $G_{\restri{x}}\sqsubseteq H'\sqsubseteq G$, the selected neighbours are
  still internal in $H'$, and the others are still border vertices of $H'$,
  so $H'_{\restri{x}}=G_{\restri{x}}$.
  It is not disk-completable: in any completion $H'$ of $G$, both neighbours
  of $x$ are internal, so $H'_{\restri{x}}$ contains them and differs from
  $G_{\restri{x}}$. By Lem.~\ref{lem:disk-completable}, it is therefore not
  border-agnostic. Other disks are disk-completable, such as
  $H_{\restri{x}}$, whose border vertices are internal in $H$.
\end{example}

\begin{figure}[t]
  \centering
  \begin{minipage}[b]{.3\textwidth}\centering
  \includegraphics[width=\linewidth]{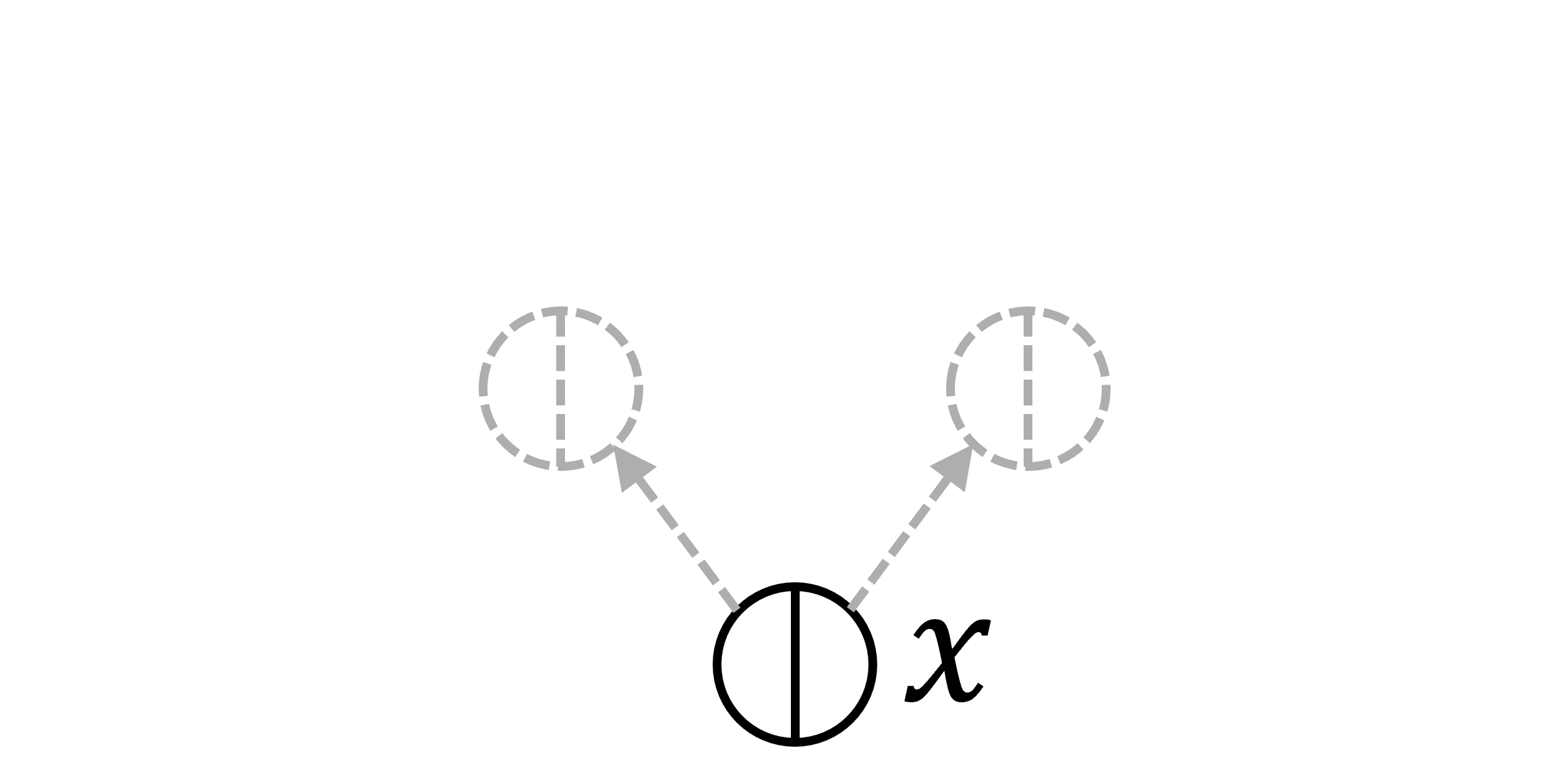}\\[-.4ex]\small
  \(G=G_{\restri{x}}\)
  \end{minipage}\hfill
  \begin{minipage}[b]{.3\textwidth}\centering
  \includegraphics[width=\linewidth]{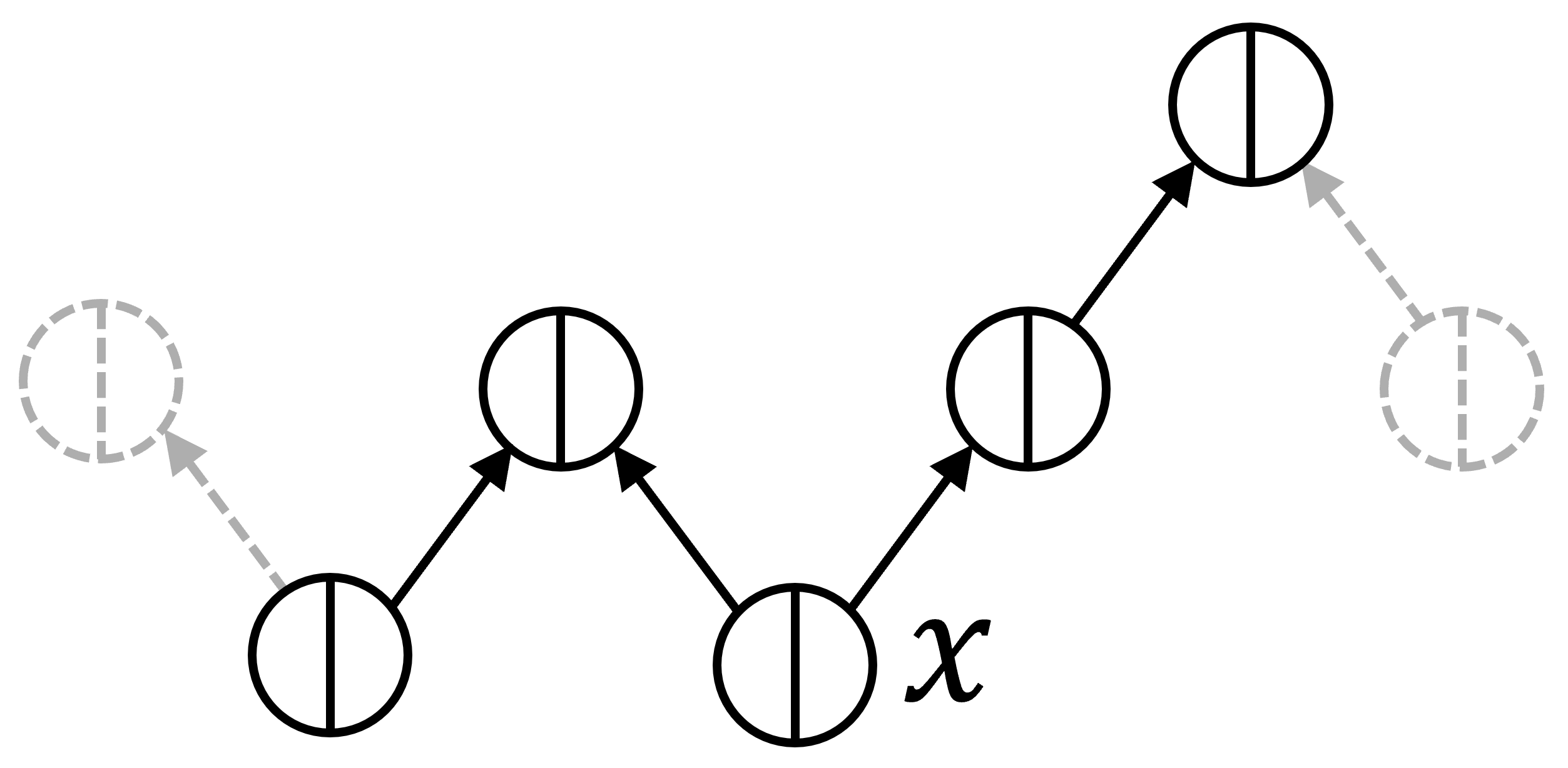}\\[-.4ex]\small \(H\)
  \end{minipage}\hfill
  \begin{minipage}[b]{.3\textwidth}\centering
  \includegraphics[width=\linewidth]{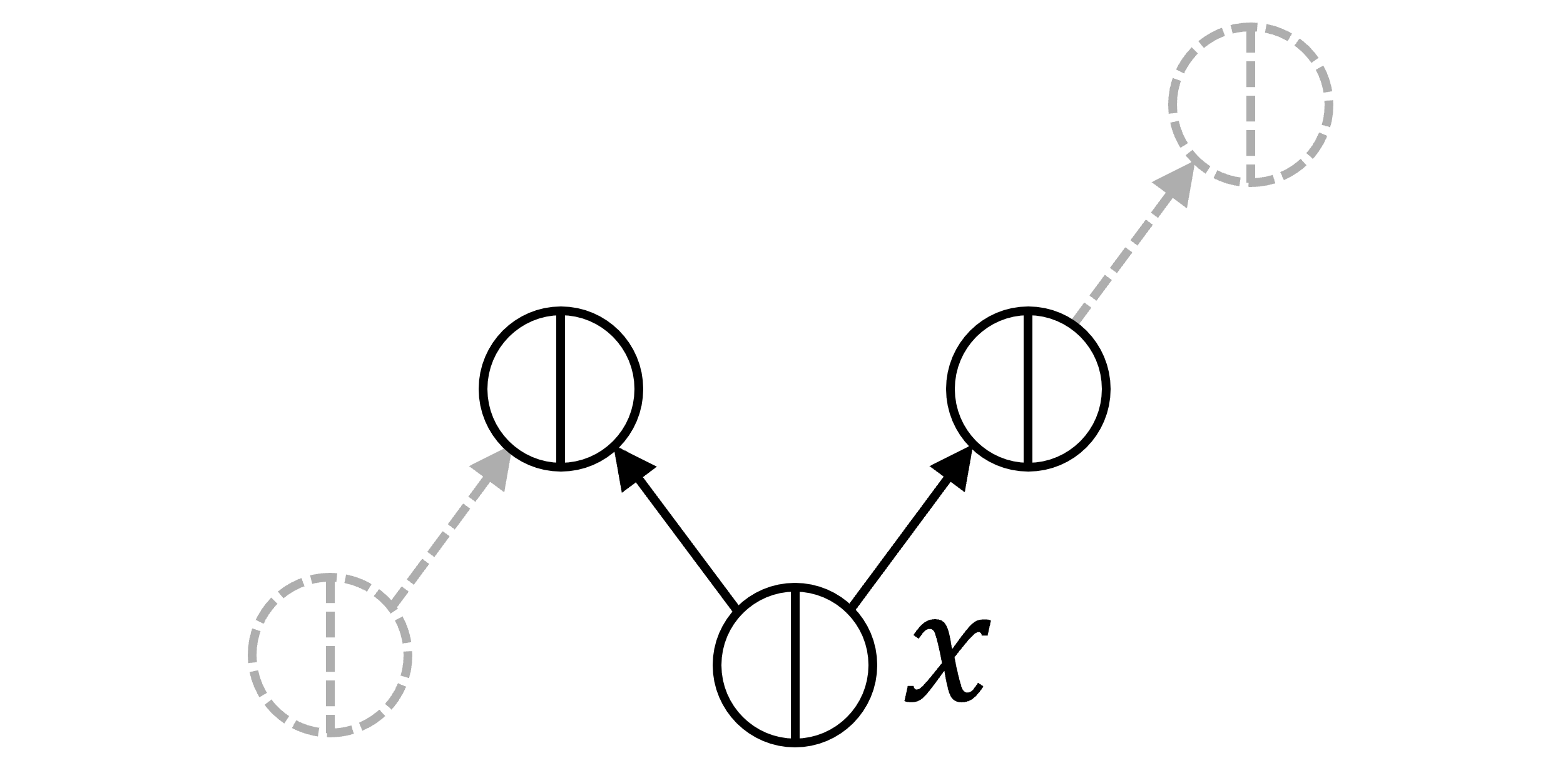}\\[-.4ex]\small
  \(H_{\restri{x}}\)
  \end{minipage}
  \caption{Ex.~\ref{ex:neighbours}: the graph $G=G_{\restri{x}}$ (left), a
  completion $H$ of $G$ (middle) and its disk $H_{\restri{x}}$ (right). The
  disk $H_{\restri{x}}$ is disk-completable: its border vertices are
  internal in $H$. The disk $G_{\restri{x}}$ is not: in any completion of
  $G$, both neighbours of $x$ become internal and are selected.}
  \label{fig:neighbours}
\end{figure}

\ifnamealgebra
\begin{lemma}[Position preservation]\label{lem:pos-preservation}
Let \(\restri{}\) be border-agnostic and let \(A_{(-)}\) be an
\(\restri{}\)-local operator. Then 
\[
\I_{A_xG}\namesubseteq\I_G.
\]
\end{lemma}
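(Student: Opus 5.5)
The plan is to reduce the statement to disks and then handle disks with a renaming-invariance argument. Border preservation (Def.~\ref{def:locality}) is the tool that rules out names grown out of border vertices. I take $\namesubseteq$ to be read on positions, so time tags play no role.

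\emph{Reduction to disks.} If $G\notin\graphvalid{\restri{}}{x}$ then $A_xG=G$ and there is nothing to prove. Otherwise write $D:=G_{\restri{x}}$, so that $A_xG=(A_xD)\sqcup G_{\crestri{x}}$. The internal vertices of $G_{\crestri{x}}$ are internal vertices of $G$. Since $\namealg{\cdot}$ is monotone and $\I_D\subseteq\I_G$, it suffices to show $\I_{A_xD}\namesubseteq\I_D$. For this it is enough that every $u\in\I_{A_xD}$ lies in $\namealg{\I_D}$.

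\emph{Step 1: no genuinely fresh names.} Let $u\in\I_{A_xD}$. I first show that every position fragment of $u$ overlaps some vertex of $\V_D$. Suppose not. Then choose a renaming $R$, i.e.\ a bijection of $\mathbb{N}$, that fixes every generator occurring in $\V_D$ but moves the generator responsible for that fragment of $u$. This gives $R(D)=D$. By renaming-invariance of $\restri{}$ and of $A_{(-)}$ we get $A_{\sp{R(x)}}(R(D))=R(A_xD)$, and since $R(x)=x$ this yields $R(A_xD)=A_xD$. Hence $R(u)$ is an internal vertex of $A_xD$ distinct from $u$ that overlaps $u$, which contradicts the non-overlapping condition of Def.~\ref{def:graphs}.

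\emph{Step 2: from overlap to membership.} Next I upgrade ``overlaps $\V_D$'' to ``belongs to $\namealg{\I_D}$''. By border preservation, $\B_{A_xD}=\B_D$, so every border vertex $b$ of $D$ is still a vertex of $A_xD$. If $u$ overlapped such a $b$, then $u\neq b$, since one is internal and the other a border, and non-overlapping in $A_xD$ fails. So every fragment of $u$ overlaps only internal vertices of $D$. Using the equations $u.l\lor u.r=u$, $(u\lor v).l=u$, and so on, I then rebuild $\sp{u}$ from fragments of $\I_D$, which places $u$ in $\namealg{\I_D}$. The case of $u$ properly containing several fragments is treated via $\lor$, and the time tag is restored by $t.(-)$.

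Where I expect trouble is the renaming in Step~1. A renaming is a bijection on generators $\mathbb{N}$, and $u$ might share its generators with $\V_D$ while only its $\{l,r\}$-suffix is new. In that case no renaming can separate them, and Step~2 must carry the argument: a name $v.p$ with $v\in\I_D$ already lies in $\namealg{\I_D}$. If border preservation on $D$ alone does not suffice (for instance because border-vertex names interact with the time-tag equations), I would fall back on Lem.~\ref{lem:disk-completable}. Pick $H$ with $\V_D\subseteq\I_H$ and $D=H_{\restri{x}}$. In $A_xH=A_xD\sqcup H_{\crestri{x}}$ the border vertices of $D$ become internal vertices of $H_{\crestri{x}}$, and non-overlapping in the well-defined graph $A_xH\in\sshift$ then forbids $u$ from touching them.
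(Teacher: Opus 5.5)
There is a genuine gap. Your Step~1 fails in exactly the case you flag, and neither Step~2 nor the fallback closes it. Let $D$ have the internal vertex $m.l.l$ and no vertex overlapping $m.l.r$, and suppose $A_xD$ contains the internal vertex $m.l$, i.e.\ the rule merges $m.l.l$ with a fragment that is absent from $D$. By the position-fragments lemma (claim~1 with the word $r$), $m.l\notin\namealg{\I_D}$, yet nothing in your argument excludes it:
\begin{itemize}
\item a permutation of $\mathbb{N}$ fixing the generators of $\V_D$ fixes $m$, so renaming-invariance gives no constraint on $u=m.l$;
\item $m.l$ overlaps only the internal vertex $m.l.l$ of $D$, so border preservation says nothing;
\item your remark that $v.p\in\namealg{\I_D}$ for $v\in\I_D$ covers refinements of internal names, whereas $m.l$ is a coarsening of one;
\item the completion $H$ given by Lem.~\ref{lem:disk-completable} only makes the vertices of $\V_D$ internal. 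Nothing forces $H$ to contain a vertex at $m.l.r$, so non-overlapping in $A_xH$ is not violated.
\end{itemize}
Even in the fully fresh case your contradiction is not the right one. For $u=m$ with $m$ not occurring in $D$, a permutation gives $R(u)=m'$, which does not overlap $m$. There you would need finiteness of $A_xD$ instead.

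The missing idea in the paper's proof is border-agnosticism, which your main line never uses: it lets you \emph{choose} a hostile context. The paper first rules out $w\in\namealg{\V_G}$ by the border-preservation argument of your Step~2. It then forms $G'=G\sqcup RG$, where the renaming $R$ makes $RG$ non-overlapping with $\V_G$ but puts an internal vertex on the missing fragment of $w$ (here at $m.l.r$).
\begin{itemize}
\item Closure under renaming and disjoint union gives $G'\in\sshift$.
\item Border-agnosticism gives $G'_{\restri{x}}=G_{\restri{x}}$.
\item The locality equation then demands the cup $(A_xG_{\restri{x}})\sqcup G'_{\crestri{x}}$. This cup is undefined, because $w$ overlaps a vertex of the context.
\end{itemize}
A single completion from disk-completability is too weak for this; you need the statement for every extension. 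Alternatively, you can keep your renaming strategy if you drop the restriction to permutations of $\mathbb{N}$, since name-algebra renamings may split generators. For instance, take the involution with $R(m)=(m.l.l\lor(m.l.r.r\lor m.l.r.l))\lor m.r$, identity on other generators. It fixes every vertex of $D$, hence $R(A_xD)=A_xD$. It also sends $m.l$ to a distinct name sharing the fragment $m.l.l$ with it, which contradicts non-overlapping. Exchanging the two halves of any fragment of $u$ that meets no vertex of $D$ works the same way, including for fully fresh names. This route uses renaming-invariance where the paper uses border-agnosticism. Your Step~2 reconstruction then still needs $\I_D$ finite (a König-type argument), the same finiteness the paper's proof invokes.
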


\begin{proof}
If \(G\notin\graphvalid{\restri{}}{x}\), then \(A_xG=G\). Otherwise
\(A_xG=(A_xG_{\restri{x}})\sqcup G_{\crestri{x}}\), and
Def.~\ref{def:locality} preserves the border of the disk, so a new interior
name of \(A_xG\) comes from a new interior name of \(A_xG_{\restri{x}}\).
We proceed by contradiction. Assume
\(\I_{A_xG}\not\namesubseteq\I_G\). Since \(\namealg{\I_G}\) is a
subalgebra, some vertex of \(\I_{A_xG}\) already lies outside it, and by the
previous paragraph we may take it of the form
\(w\in\I_{A_xG_{\restri{x}}}\) with \(w\notin\namealg{\I_G}\).
In fact \(w\notin\namealg{\V_G}\): otherwise Lem.~\ref{lem:fragments} would
make \(w\) overlap some \(b\in\V_G\), necessarily in \(\B_G\); border
preservation and the invariance of the context keep \(b\) a border vertex of
\(A_xG\), so \(w\) and \(b\) would be two overlapping vertices of the graph
\(A_xG\), whence \(w=b\)---impossible, as internal and border vertices are
disjoint.
The renaming expressivity of the name algebra~\cite{ArrighiCreation} then
provides a renaming \(R\) which sends every name of \(G\) to a name
overlapping none of \(\V_G\), while placing a vertex of \(\I_{RG}\) at
position \(\sp{w}\); this uses that \(G\) is finite, so that fresh integers
are available. Since \(\sshift\) is
closed under renaming and under disjoint union, we may form
\(G'=G\sqcup RG\in\sshift\), which satisfies
\(G\sqsubseteq G'\) and \(G_{\restri{x}}\sqsubseteq G'\).
Border-agnosticism yields \(G'_{\restri{x}}=G_{\restri{x}}\), so locality
would form \((A_xG_{\restri{x}})\sqcup G'_{\crestri{x}}\). This cup is
undefined: the image contains \(w\) while the context contains a vertex at
position \(\sp{w}\), and whether or not these two names coincide, internal
vertices of the two parts must be disjoint and non-overlapping.
\end{proof}
\else
\fi

\section{From neighbourhoods to rewrite patterns}\label{sec:crs}

\subsection{Mutex sets of cones}

Our goal is to present local rules as rewrite systems. Their left-hand sides
are what the neighbourhood scheme selects, that is, the disks. We therefore
characterise the families of graphs $\D_x$ that arise as the disks
$\diskA{x}$ of some border-agnostic neighbourhood scheme. Three conditions are
needed. First, $\D_x$ contains only cones of $x$, reflecting the cone
condition. Second, a graph contains at most one element of $\D_x$: this
\emph{unambiguity} condition reflects border-agnosticism, which we proved
equivalent to having at most one disk at $x$ in any graph
(Rem.~\ref{rem:disk-unique}). Third, the family is invariant under renaming,
reflecting renaming-invariance.

\begin{definition}[Mutex]\label{def:mutex}
A map {$\mathcal{D}:\X\to\mathcal{P}(\xcone{}),\ x\mapsto \mathcal{D}_x$} defines \emph{a mutex set of cones} iff it is
\begin{description}[leftmargin=!,labelwidth=2.8cm,itemsep=.25ex]
\item[\RX{conic:}{well-indexed:}] {$\mathcal{D}_x\subseteq\xcone{x}$}\RX{}{ and $\mathcal{D}_x\cap \mathcal{D}_y\neq \varnothing$ implies $x=y$}.
\item[unambiguous:] $\forall x\in\X,\ \forall G \in \sshift,\
|\{ D\in\mathcal{D}_x \mid D\sqsubseteq G\}|\leq 1$.
\item[renaming-invariant:] for every renaming $R$,
$\mathcal{D}_{\sp{R(x)}}=\{RD\mid D\in\mathcal{D}_x\}$.
\end{description}
\end{definition}

The disks of a neighbourhood scheme indeed form a mutex set of cones.

\smallskip
\begin{lemma}[From neighbourhoods to mutex cones]\label{lem:neigh-to-mutex}
Let $\restri{}$ be a border-agnostic neighbourhood scheme. Then the function $\diskA{} : \X\to\calP(\xcone{}),\ x\mapsto \diskA{x}$ is a mutex set of cones.
\end{lemma}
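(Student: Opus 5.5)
We verify the three clauses of Def.~\ref{def:mutex} for $\D_x:=\diskA{x}$ one after the other, each being a direct transcription of one axiom of Def.~\ref{def:neighbourhood} or of a lemma already proved.

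\emph{Conic.} Every element of $\diskA{x}$ has the form $G_{\restri{x}}$ with $G\in\graphvalid{\restri{}}{x}$. The cone axiom of Def.~\ref{def:neighbourhood} states precisely that $G_{\restri{x}}\in\xCone$. Since $\sshift$ is closed under inclusion, $G_{\restri{x}}\in\sshift$, as Def.~\ref{def:cones} requires. Hence $\diskA{x}\subseteq\xcone{x}$.

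\emph{Unambiguous.} Fix $x$ and $G\in\sshift$, and suppose two disks $D^1,D^2\in\diskA{x}$ satisfy $D^i\sqsubseteq G$. We cannot apply Lem.~\ref{rem:disk-unique} verbatim, since it assumes $G\in\graphvalid{\restri{}}{x}$. Instead we use stability. Each $D^i$ is some $G^i_{\restri{x}}$ with $G^i$ in the domain, so $D^i\sqsubseteq G$ puts $G$ in $\graphvalid{\restri{}}{x}$. Border-agnosticism (Def.~\ref{def:strong-ext}), applied to $G^i$ and $G$, then gives $G_{\restri{x}}=G^i_{\restri{x}}=D^i$ for $i=1,2$. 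Hence $D^1=D^2$, and the set of disks of $x$ contained in $G$ has at most one element. This is the only step where border-agnosticism is used, and the only point needing care is the domain issue just handled.

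\emph{Renaming-invariant.} Let $R$ be a renaming and $D=G_{\restri{x}}\in\diskA{x}$. By renaming-invariance of $\restri{}$, $R(G)\in\graphvalid{\restri{}}{\sp{R(x)}}$ and $\restri{\sp{R(x)}}(R(G))=\sp{R(\restri{x}(G))}$. Renamings commute with induced subgraphs, since they merely rename vertices bijectively on positions. Hence $R(G)_{\restri{\sp{R(x)}}}=R(G_{\restri{x}})=RD$, which gives $\{RD\mid D\in\diskA{x}\}\subseteq\diskA{\sp{R(x)}}$.

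For the reverse inclusion, apply the same argument to the inverse renaming $R^{-1}$, defined through the inverse bijection on positions, at the position $\sp{R(x)}$. This yields $R^{-1}D'\in\diskA{x}$ for every $D'\in\diskA{\sp{R(x)}}$, so $D'=R(R^{-1}D')$ lies in the image, proving equality.

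The only mildly delicate bookkeeping is checking that $R^{-1}$ is a renaming and that $\sp{R^{-1}(\sp{R(x)})}=x$. Both are immediate from the definition, because $R$ is determined by a bijection of $\X$ and commutes with time tags.
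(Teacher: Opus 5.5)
Your proposal is correct and follows essentially the same route as the paper. Conicity comes from the cone axiom, and unambiguity from stability (to place $G$ in the domain) together with the disk-uniqueness argument of Lem.~\ref{rem:disk-unique}, which you inline by applying border-agnosticism directly to each $G^i$. Renaming-invariance comes from that of the scheme; here you only add detail the paper leaves implicit, namely the reverse inclusion via $R^{-1}$ and the fact that renamings commute with induced subgraphs.
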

\begin{proof}
By the cone condition, each $G_{\restri{x}}$ lies in $\xcone{x}$, so $\diskA{}$ is \RX{conic}
{well-indexed}. Renaming invariance follows from that of $\restriA{}$ and from the closure of $\sshift$ under renaming.
Unambiguity is Rem.~\ref{rem:disk-unique}, since by stability any graph of
$\sshift$ containing a disk at $x$ lies in $\graphvalid{\restri{}}{x}$.
\end{proof}

\smallskip
Conversely, a mutex set of cones \(\D\) defines a neighbourhood scheme: in a
graph \(G\), the neighbourhood of \(x\) is the element of \(\D_x\) contained
in \(G\), when there is one.

\begin{definition}[Induced neighbourhood scheme]\label{def:induced-neigh}\label{def : induced neighbourhood scheme}
We call
\[
\graphvalidmutex{\D}{x}
=\{G\in\sshift\mid\exists D\in\D_x,\ D\sqsubseteq G\}
\]
the set of graphs containing some $D\in\mathcal{D}_x$. Given a graph
$G\in\graphvalidmutex{\D}{x}$, unambiguity makes this $D$ unique.
This allows us to define
\[
\restriD{\D}{}:x\mapsto\restriD{\D}{x},
\qquad
\restriD{\D}{x}:\graphvalidmutex{\D}{x}\to\mathcal P(\X),
\quad
\restrD{\D}{x}{G}=\sp{\I_D}.
\]
\end{definition}

The induced scheme is indeed a neighbourhood scheme.

\begin{lemma}[From mutex cones to neighbourhoods]\label{lem:mutex-to-neigh}
Let \(\D\) be a mutex set of cones. Then \(\restriD{\D}{}\) is a well-defined
border-agnostic neighbourhood scheme.
\end{lemma}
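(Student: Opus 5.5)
We check the clauses of Def.~\ref{def:neighbourhood} one by one for $\restriD{\D}{}$, together with border-agnosticism (Def.~\ref{def:strong-ext}). The central fact used throughout is that, for $G\in\graphvalidmutex{\D}{x}$ with witness $D\in\D_x$, $D\sqsubseteq G$, we have $G_{\restriD{\D}{x}}=D$. Since $\D$ is unambiguous, $D$ is unique, so $\restrD{\D}{x}{G}$ is well defined. We then check that $G_{\sp{\I_D}}=D$, i.e.\ that an induced subgraph is determined by its internal positions within the ambient graph. The internal vertices of $G_{\sp{\I_D}}$ are the vertices of $\I_G$ at positions $\sp{\I_D}$. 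Since $D\sqsubseteq G$ gives $\I_D\subseteq\I_G$, and unicity of positions holds, these are exactly $\I_D$. The edges are all edges of $G$ touching $\I_D$, which by definition of $D$ as an induced subgraph are exactly $\E_D$, and the borders and states agree likewise. I expect this identification $G_{\restriD{\D}{x}}=D$ to be the only genuinely delicate point, as it relies on the precise definition of induced subgraphs and on unicity of positions. Everything else reduces to it.

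\textbf{Cone.} Given the identification, $\restrD{\D}{x}{G}=\sp{\I_D}\subseteq\sp{\I_G}$, and $G_{\restriD{\D}{x}}=D\in\xcone{x}$ by conicity. \textbf{Stability.} If $G_{\restriD{\D}{x}}=D\sqsubseteq H$ with $H\in\sshift$, then $D$ witnesses $H\in\graphvalidmutex{\D}{x}$.

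\textbf{Border-agnosticism.} In the same situation, $D\sqsubseteq H$, so the central fact applied to $H$ gives $H_{\restriD{\D}{x}}=D=G_{\restriD{\D}{x}}$. \textbf{Extensivity.} This is the special case $H\sqsubseteq G$ of border-agnosticism, so it needs no separate argument. To state it carefully, we note that $D\sqsubseteq H$ places $H$ in the domain by stability.

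\textbf{Renaming-invariance.} Let $G$ contain $D\in\D_x$. A renaming commutes with taking induced subgraphs, so $RD\sqsubseteq RG$. By renaming-invariance of $\D$ we have $RD\in\D_{\sp{R(x)}}$, and $RG\in\sshift$ by closure of $\sshift$. Hence $RG\in\graphvalidmutex{\D}{\sp{R(x)}}$, and $\restrD{\D}{\sp{R(x)}}{RG}=\sp{\I_{RD}}=\sp{R(\I_D)}$. This set is the image under $\sp{R(\cdot)}$ of $\restrD{\D}{x}{G}$, as required.
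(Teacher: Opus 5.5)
Your proof is correct and takes the same route as the paper's: unambiguity makes the selected pattern unique, and the cone, border-agnosticism and renaming-invariance clauses follow directly from the three mutex conditions. You are in fact more complete than the paper, which leaves implicit the identification $G_{\sp{\I_D}}=D$ (via unicity of positions) as well as the stability and extensivity clauses that you verify explicitly.
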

\begin{proof}
\emph{Renaming-invariance.}
Suppose \(D\in\D_x\) and \(D\sqsubseteq G\). Renaming invariance of \(\D\)
and closure of \(\sshift\) give
\(RD\in\D_{\sp{R(x)}}\) and \(RD\sqsubseteq RG\). Unambiguity makes these
selected patterns unique, hence
\[
RG_{\restriD{\D}{x}}
=RD
=(RG)_{\restriD{\D}{\sp{R(x)}}}.
\]
\emph{Cone.}
The selected \(D\) lies in \(\xcone{x}\) by the cone condition of mutex.
\emph{Border-agnosticism.}
If \(G_{\restriD{\D}{x}}=D\sqsubseteq H\), then
\(H\in\graphvalidmutex{\D}{x}\); unambiguity makes \(D\) its unique selected
pattern. Thus \(H_{\restriD{\D}{x}}=D\).
\end{proof}

The two constructions are inverse of one another.

\begin{proposition}[Equivalence between neighbourhood schemes and mutex sets of cones]\label{prop:neigh-mutex}
The function mapping any border-agnostic neighbourhood scheme $\restri{}$ to $\diskA{}$ (see Lem.~\ref{lem:neigh-to-mutex}) is a bijection between border-agnostic neighbourhood schemes on the one hand, and mutex sets of cones on the other. The inverse associates, to any mutex set of cones $\D$, the neighbourhood scheme $\restriD{\D}{}$ (see Lem.~\ref{lem:mutex-to-neigh}).
\end{proposition}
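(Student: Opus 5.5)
The plan is to show that the two constructions are mutually inverse. Lemmas~\ref{lem:neigh-to-mutex} and~\ref{lem:mutex-to-neigh} already guarantee that each map lands in the right class, so it suffices to check the two round trips. The map then has an inverse and is therefore a bijection.

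\emph{First round trip: $\diskA{\restriD{\D}{}}=\D$ for a mutex set of cones $\D$.}
\begin{itemize}
\item[$(\subseteq)$] Let $G\in\graphvalidmutex{\D}{x}$. By definition, $G_{\restriD{\D}{x}}$ is the induced subgraph of $G$ on $\sp{\I_D}$, where $D$ is the unique element of $\D_x$ with $D\sqsubseteq G$. I claim that $G_{\sp{\I_D}}=D$. Since $D\sqsubseteq G$ is an induced subgraph, $D=G_U$ for some $U$ with $\I_G\cap U=\I_D$. Induction only depends on which internal vertices are kept, so $G_{\sp{\I_D}}=D$. Hence every disk lies in $\D_x$.
\item[$(\supseteq)$] For $D\in\D_x$, we have $D\in\sshift$ (cones belong to $\sshift$) and $D\sqsubseteq D$. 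So $D\in\graphvalidmutex{\D}{x}$ and $D_{\restriD{\D}{x}}=D$, that is, $D$ is a disk.
\end{itemize}

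\emph{Second round trip: $\restriD{\diskA{}}{}=\restri{}$ for a border-agnostic $\restri{}$.} This has two parts, and I expect the first to be the main point.
\begin{itemize}
\item \emph{Same domains.} Suppose $G\in\graphvalid{\restri{}}{x}$. Then $G_{\restri{x}}\in\diskA{x}$ and $G_{\restri{x}}\sqsubseteq G$, so $G\in\graphvalidmutex{\diskA{}}{x}$. Conversely, suppose $D=H_{\restri{x}}\sqsubseteq G$ for some $H\in\graphvalid{\restri{}}{x}$. Stability gives $G\in\graphvalid{\restri{}}{x}$.
\item \emph{Same values.} For $G$ in the common domain, border-agnosticism applied to $D=H_{\restri{x}}\sqsubseteq G$ gives $G_{\restri{x}}=D$. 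Equivalently, disk uniqueness (Lem.~\ref{rem:disk-unique}) shows that $G_{\restri{x}}$ is the unique disk contained in $G$. Therefore $\restrD{\diskA{}}{x}{G}=\sp{\I_D}=\sp{\I_{G_{\restri{x}}}}$.
\end{itemize}

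The remaining subtlety in the second round trip is recovering the set $\restr{x}{G}$ itself, not just the induced graph. The cone condition gives $\restr{x}{G}\subseteq\sp{\I_G}$, and $\I_{G_{\restri{x}}}=\I_G\cap\restr{x}{G}$. Together these yield $\sp{\I_{G_{\restri{x}}}}=\restr{x}{G}$, so the two schemes agree as functions on the same domains.

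Both compositions are therefore identities, which proves the bijection. One point to double-check is that $\D_x\subseteq\xcone{x}$ implies $\D_x\subseteq\sshift$, which holds because cones are defined as graphs in $\sshift$.
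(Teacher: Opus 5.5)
Your proposal is correct and follows essentially the same route as the paper. Like the paper, you check both round trips: equal domains via stability, equal values via border-agnosticism, and $D_{\restriD{\D}{x}}=D$ for the mutex direction. You are a little more explicit than the paper in two places. You verify $G_{\sp{\I_D}}=D$, which tacitly uses unicity of positions and which the paper dismisses as ``by definition''. You also recover the position set $\restr{x}{G}$ itself from the cone condition. These are refinements of the same argument rather than a different one.
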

\begin{proof}
\emph{Right inverse.} Let $\restriB{}$ be a neighbourhood scheme, we want $\restriB{} = \restriD{\sshift_{\restriB{}}}{}$. Let $x\in \mathcal{X}$ be a position. First we prove that $\graphvalid{\restriB{}}{x}$ the domain of $\restriB{x}$ is equal to $\graphvalidmutex{\diskB{}}{x}$ the domain of $\restriD{\diskB{}}{x}$. Let us unravel the definitions. On one side $\graphvalidmutex{\diskB{}}{x}$ is the set of all graph $G$ such that there exists $D\in \diskB{x}$ such that $D\sqsubseteq G$. For all $G\in \graphvalid{\restriB{}}{x}$ we have $G_{\restri{x}}\sqsubseteq G$ and so $\graphvalid{\restriB{}}{x} \subseteq \graphvalidmutex{\diskB{}}{x}$. Next, by border-agnosticism $\restriB{}$ is defined on all graphs containing a disk, so that we also have $ \graphvalidmutex{\diskB{}}{x}\subseteq \graphvalid{\restriB{}}{x}$.\\
Then we consider $G\in \graphvalid{\restriB{}}{x}=\graphvalidmutex{\diskB{}}{x}$. By definition there exists $D \in \diskB{x}$ such that $D \sqsubseteq G$. We then have, by border-agnosticism, $G_{\restriD{\diskB{}}{x}}=D=G_{\restriB{x}}$ from which $\restriB{} = \restriD{\sshift_{\restriB{}}}{}$ follows.

\emph{Left inverse.}
Now we consider {$\D:x\mapsto\D_x$} a mutex set of cones, and we prove that ${\sshift}_{\restriD{\D}{}} = \D$. First we check that $\D_x\subseteq {\sshift}_{\restriD{\D}{x}}$. By definition of $\restriD{\D}{x}$, we have $D_{\restriD{\D}{x}} = D$ which implies $D\in {\sshift}_{\restriD{\D}{x}}$. The reverse inclusion ${\sshift}_{\restriD{\D}{x}}\subseteq \D_x$ is by definition.
\end{proof}

\subsection{Rewrite system}

By idempotence (Rem.~\ref{rem:idempotence}), a local rule \(A_x\) is
determined by the rules \(\{D\to A_xD\}_{D\in\diskA{x}}\). Conversely, we
want to recover a local rule from such rules alone. But a local rule is defined as
a global function on \(\sshift\), so the cup
\(A_xG_{\restri{x}}\sqcup G_{\crestri{x}}\) is well defined by assumption;
a rewrite system only provides the rules, and must instead guarantee that
replacing a left-hand side by its right-hand side within any graph yields a
well-defined graph. This is the role of context preservation, a local
condition checked on each rule alone: it makes the cup of the right-hand
side with any context of the left-hand side well defined, up to acyclicity
and finite chains, which are global and required separately. Functionality
and a mutex domain (Prop.~\ref{prop:neigh-mutex}) guarantee that at most one
update applies at each position; renaming-invariance makes the rules
independent of the choice of names; and \(\sshift\)-preservation, keeps the result in \(\sshift\).

\begin{definition}[Causal rewrite system]\label{def:crs}
A set of rewrite rules
$\{D_x^j\to G_x^j\}_{x\in\X,j\in J}$ specifies a
\emph{causal rewrite system} over $\sshift$ iff it is
\begin{description}[style=unboxed,leftmargin=0pt,itemindent=0pt,labelsep=.45em,itemsep=.35ex,parsep=0pt]
\item[functional:] $D_x^j=D_y^k$ implies $(x,j)=(y,k)$, so that
$f:D_x^j\mapsto G_x^j$ is a function;
\item[renaming-invariant:] $f(RD_x^j)=R(G_x^j)$ for every renaming $R$;
\item[mutex-domain:] $\mathcal D:x\mapsto\{D_x^j\mid j\in J\}$ defines a
mutex set of cones;
\item[context-preserving:]
$\possubseteq{\I_{G_x^j}}{\I_{D_x^j}}$,
$\B_{G_x^j}=\B_{D_x^j}$, and
$\dangling{G_x^j}=\dangling{D_x^j}$;
\item[acyclicity and finite chains:] whenever \(D_x^j\sqcup H\) is well defined
and lies in \(\sshift\), the union of the edge sets of \(G_x^j\) and \(H\) is
acyclic and contains only finite chains;
\item[$\sshift$-preserving:] whenever \(D_x^j\sqcup H\) and \(G_x^j\sqcup H\) are well defined,
\[
D_x^j\sqcup H\in\sshift\implies G_x^j\sqcup H\in\sshift.
\]
\end{description}
\end{definition}

To show that the rules \(\{D\to A_xD\}\) form a CRS, the main point is context
preservation, which follows from the structural properties of local rules
established in~\cite{arrighi2026causalgraphrewriting}. Border vertices are
preserved by Def.~\ref{def:locality}, and positions
by~\cite[Lemma~3.26]{arrighi2026causalgraphrewriting}. Border edges require
more: \cite[Lemma~3.28]{arrighi2026causalgraphrewriting} only preserves them
on disk-completable disks, and we saw in
Sec.~\ref{sec:structural} that some disks with global borders are not.
Lem.~\ref{lem:disk-completable} rules out such disks, so the lemma applies to
every disk.
\ifnamealgebra
With the name algebra, the argument of that lemma carries over except for one inclusion. Preservation of boundary vertices, and
\(\dangling{D}\subseteq\dangling{A_xD}\), do not use position preservation,
so both carry over unchanged. The reverse inclusion in that proof uses
\(\sp{\I_{A_xD}}\subseteq\sp{\I_D}\) to conclude that the internal endpoint
was already a vertex of \(\I_D\). Algebraic position preservation only places
that endpoint in \(\namealg{\I_D}\). The inclusion still holds: the endpoint
is a vertex of the completion, it shares a position fragment with a vertex
of \(\I_D\), and non-overlapping positions force the two to coincide.
\fi

\ifnamealgebra

\fi

\begin{lemma}[From local rules to CRS]\label{lem:local-to-crs}
For any border-agnostic neighbourhood scheme \(\restri{}\) and
\(\restri{}\)-local rule \(A_{(-)}\), the family of rules
\[
\{D\to A_xD\}_{D\in\diskA{x},\,x\in\X}
\]
is a causal rewrite system.
\end{lemma}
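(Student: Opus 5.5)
I will check the six clauses of Def.~\ref{def:crs} one by one for the family \(\{D\to A_xD\}_{D\in\diskA{x},x\in\X}\), indexing the rules at \(x\) by \(J:=\diskA{x}\) itself (or a renaming-compatible reindexing of it).

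\emph{Functionality and mutex domain.} The rule with left-hand side \(D\) is indexed by \(D\), so equal left-hand sides at the same position give the same index. Across positions, \(D\in\diskA{x}\cap\diskA{y}\) forces \(x=y\): \(D\) is a cone of both, so both positions lie in \(\sp{\Past(D)}\), and accessibility from the vertex at \(x\) to the vertex at \(y\) within \(\I_D\), and conversely, together with acyclicity, gives \(x=y\). So \(f:D\mapsto A_xD\) is a function. The mutex-domain clause is exactly Lem.~\ref{lem:neigh-to-mutex}.

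\emph{Renaming-invariance.} This follows from the renaming clause of Def.~\ref{def:locality}: \(A_{\sp{R(x)}}(RD)=R(A_xD)\), and \(RD\in\diskA{\sp{R(x)}}\) by renaming-invariance of \(\restri{}\).

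\emph{Context preservation.} This is the main point. Equality of border vertices is the border-preservation requirement of Def.~\ref{def:locality}, applied to \(G=D\), using idempotence (Rem.~\ref{rem:idempotence}) so that \(D_{\restri{x}}=D\). Position preservation \(\sp{\I_{A_xD}}\subseteq\sp{\I_D}\) is \cite[Lemma~3.26]{arrighi2026causalgraphrewriting}. For border edges, \cite[Lemma~3.28]{arrighi2026causalgraphrewriting} needs the disk to be the neighbourhood of some graph in which all its vertices are internal. This is precisely what Lem.~\ref{lem:disk-completable} supplies: \(H\in\sshift\) with \(\V_D\subseteq\I_H\) and \(H_{\restri{x}}=D\). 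Applying that lemma to \(H\) yields \(\dangling{A_xD}=\dangling{D}\). I expect this to be the delicate step: I must check that the hypotheses of the cited lemma match, namely that the lemma is stated for \(A_xH=A_xD\sqcup H_{\crestri{x}}\) with \(H\) a completion, and that its conclusion transfers to \(A_xD\) itself rather than to \(A_xH\). If needed, I would redo the argument directly. In \(A_xH\), each border edge of \(D\) connects an internal vertex to a vertex internal in \(H_{\crestri{x}}\). Port-saturation in the well-defined graph \(A_xH\), together with the fact that the context \(H_{\crestri{x}}\) is untouched, forces the same edge to appear in \(A_xD\), and conversely.

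\emph{Acyclicity, finite chains and \(\sshift\)-preservation.} Let \(D\sqcup K\) be well defined and in \(\sshift\), and put \(G:=D\sqcup K\). Then \(D\sqsubseteq G\), so by stability and border-agnosticism \(G\in\graphvalid{\restri{}}{x}\) and \(G_{\restri{x}}=D\). Consequently \(A_xG=A_xD\sqcup G_{\crestri{x}}\in\sshift\) is a graph. Its edge set is the union of the edges of \(A_xD\) and those of \(K\): the edges of \(K\) not touching \(\I_D\) are those of \(G_{\crestri{x}}\), and the remaining ones are border edges of \(D\), which are preserved by the previous step. Hence acyclicity and finite chains hold. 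Moreover, when \(A_xD\sqcup K\) is well defined, it coincides with \(A_xG\in\sshift\), which gives \(\sshift\)-preservation. The one point to verify carefully is this identification \(A_xD\sqcup K=A_xD\sqcup G_{\crestri{x}}\), which again relies on the preservation of border edges and border vertices.
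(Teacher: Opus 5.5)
Your proposal follows the paper's proof clause by clause: mutex domain from Lem.~\ref{lem:neigh-to-mutex}, renaming-invariance from \(A_{(-)}\), border vertices from Def.~\ref{def:locality}, positions from \cite[Lemma~3.26]{arrighi2026causalgraphrewriting}, and border edges from \cite[Lemma~3.28]{arrighi2026causalgraphrewriting} applied to the completion given by Lem.~\ref{lem:disk-completable}. The global clauses come from border-agnosticism plus the locality equation on \(D\sqcup K\); the identification you flag there is immediate, since \((D\sqcup K)_{\crestri{x}}=K\) when the internal vertices of \(K\) lie outside \(\sp{\I_D}\), and your check that a disk sits at a single position fills in what the paper calls functional ``by definition''. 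Do not fall back on your port-saturation sketch for border edges, though: it fails when \(A_x\) deletes the internal endpoint \(u\) of a border edge (only \(\sp{\I_{A_xD}}\subseteq\sp{\I_D}\) is known), because that edge then survives in \(A_xH\) with \(u\) demoted to a border vertex and no port conflict arises, so the cited lemma, as in the paper, is what carries this clause.
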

\begin{proof}
It is functional by definition. Prop.~\ref{prop:neigh-mutex} gives the
mutex domain, and renaming invariance comes from \(A_{(-)}\).
Def.~\ref{def:locality} preserves border vertices.
\ifnamealgebra
Lem.~\ref{lem:pos-preservation} gives
\(\I_{A_xD}\namesubseteq\I_D\),
\else
\cite[Lemma~3.26]{arrighi2026causalgraphrewriting} gives
\(\sp{\I_{A_xD}}\subseteq\sp{\I_D}\),
\fi
and border edges are preserved: Lem.~\ref{lem:disk-completable} provides
\(H\in\sshift\) with \(\V_D\subseteq\I_H\) and \(H_{\restri{x}}=D\), and
applying~\cite[Lemma~3.28]{arrighi2026causalgraphrewriting} to \(H\) gives
\(\dangling{A_xD}=\dangling{D}\).
Finally, whenever \(D\sqcup H\) is well defined and lies in \(\sshift\),
border-agnosticism selects \(D\) as the neighbourhood of \(D\sqcup H\), so the
locality equation reads \(A_x(D\sqcup H)=A_xD\sqcup H\). The left-hand side
is a graph in \(\sshift\), hence \(A_xD\sqcup H\) is defined, acyclic, of
finite chains, and in \(\sshift\) because \(A_x\) sends \(\sshift\) into \(\sshift\).
This proves acyclicity, finite chains, and \(\sshift\)-preservation.
\end{proof}

Conversely, a CRS acts at \(x\) on a graph \(G\) by replacing the
left-hand side \(D_x^j\) contained in \(G\), which is unique by
unambiguity, with \(G_x^j\); when \(G\) contains no
left-hand side at \(x\), it is left unchanged.

\begin{definition}[Rewriting a graph using a CRS]\label{def:crs-action}
Consider a causal rewrite system
\(\calR=\{D_x^j\to G_x^j\}_{x\in\X,j\in J}\). We define
\(A^\calR_{(-)}:x\mapsto A_x^\calR\), with $\forall G\in 
\sshift$,
\[
A_x^\calR G=
\begin{cases}
G_x^j\sqcup G_{\overline{\sp{\I_{D_x^j}}}}
&\text{if }\exists j\in J,\
G=D_x^j\sqcup G_{\overline{\sp{\I_{D_x^j}}}},\\
G&\text{otherwise.}
\end{cases}
\]
\end{definition}

The conditions of Def.~\ref{def:crs} are exactly what is needed
for this replacement to be a well-defined graph of \(\sshift\), so that the
action is a local rule.

\begin{lemma}[From CRS to local rules]\label{lem:crs-to-local}
For any causal rewrite system
$\calR=\{D_x^j\to G_x^j\}_{x\in\X,j\in J}$, the function
\(A^\calR_{(-)}\) is a \(\restriD{\D}{}\)-local rule, where
\(\D:x\mapsto\{D_x^j\mid j\in J\}\).
\end{lemma}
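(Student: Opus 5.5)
By Lem.~\ref{lem:mutex-to-neigh}, \(\restriD{\D}{}\) is a well-defined border-agnostic neighbourhood scheme. So it remains to check the three clauses of Def.~\ref{def:locality} for \(A^\calR_{(-)}\): it maps \(\sshift\) into \(\sshift\), it commutes with renamings, and it satisfies the locality equation with border preservation on disks. I will first rewrite the case condition of Def.~\ref{def:crs-action} in terms of the induced scheme. For \(G\in\sshift\), the condition \(G=D_x^j\sqcup G_{\overline{\sp{\I_{D_x^j}}}}\) holds exactly when \(D_x^j\sqsubseteq G\). In one direction, \(D_x^j\) is then an induced subgraph of the common supergraph \(G\). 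In the other, \(D_x^j\sqsubseteq G\) forces \(D_x^j=G_{\sp{\I_{D_x^j}}}\), since the internal vertices coincide and an induced subgraph keeps all edges touching them, and \(G=G_X\sqcup G_{\overline X}\) concludes. Hence the first case applies iff \(G\in\graphvalidmutex{\D}{x}\), and by unambiguity the index \(j\) is unique. Moreover \(D_x^j=G_{\restriD{\D}{x}}\) and \(G_{\overline{\sp{\I_{D_x^j}}}}=G_{\crestriD{\D}{x}}\). Functionality then makes \(G_x^j=f(D_x^j)\) well defined.

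The main step, and the one I expect to be the real obstacle, is showing that \(G_x^j\sqcup H\) is a well-defined graph of \(\sshift\), where \(H:=G_{\crestriD{\D}{x}}\). I will build the candidate tuple directly. Its internal vertices are \(\I_{G_x^j}\cup\I_H\). Its edges are \(\E_{G_x^j}\cup\E_H\), and its border vertices are the remaining endpoints. To see that this is a common supergraph of \(G_x^j\) and \(H\) in which both are induced, I will use context preservation.
\begin{itemize}
\item Since \(\sp{\I_{G_x^j}}\subseteq\sp{\I_{D_x^j}}\), the internal parts are disjoint and unicity of positions holds.
\item Since \(\B_{G_x^j}=\B_{D_x^j}\) and the border edges of \(G_x^j\) and \(D_x^j\) coincide, the edges shared with \(H\) are exactly the border edges of \(D_x^j\), which are already consistent with \(H\) inside \(G\). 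This gives port-saturation on the interface and the induced-subgraph property on both sides.
\item Border attachment follows from the same equalities.
\end{itemize}
Acyclicity and finite chains are not local. For them I invoke the corresponding clause of Def.~\ref{def:crs}, with \(D_x^j\sqcup H=G\in\sshift\). Finally, \(\sshift\)-preservation gives \(G_x^j\sqcup H\in\sshift\).

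Border preservation on disks follows by applying the construction to \(G=D_x^j\) itself. Idempotence of the induced scheme, which holds by construction since \((D_x^j)_{\restriD{\D}{x}}=D_x^j\), gives \(A_x^\calR D_x^j=G_x^j\sqcup (D_x^j)_{\crestriD{\D}{x}}\). The context part consists only of border material of \(D_x^j\), so this equals \(G_x^j\), whose border is \(\B_{D_x^j}\) by context preservation. The locality equation \(A^\calR_xG=(A^\calR_xG_{\restriD{\D}{x}})\sqcup G_{\crestriD{\D}{x}}\) then follows by substitution.

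Renaming invariance is a direct check. Renamings commute with \(\sqcup\) and with induced subgraphs. By renaming-invariance of the mutex domain, \(D_x^j\sqsubseteq G\) iff \(RD_x^j\sqsubseteq RG\) with \(RD_x^j\in\D_{\sp{R(x)}}\). Renaming-invariance of \(f\) gives \(f(RD_x^j)=RG_x^j\), so both cases of Def.~\ref{def:crs-action} commute with \(R\). The delicate part remains the edge bookkeeping in the second paragraph. Everything else is unfolding definitions.
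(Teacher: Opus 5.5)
Your overall route is the paper's. You reduce well-definedness of \(G_x^j\sqcup H\) to context preservation on the interface, and take acyclicity and finite chains from the CRS clause applied to \(D_x^j\sqcup H=G\in\sshift\). You obtain membership in \(\sshift\) from \(\sshift\)-preservation, then read off the locality equation, border preservation and renaming invariance. Your explicit check that the case condition of Def.~\ref{def:crs-action} amounts to \(D_x^j\sqsubseteq G\) is a useful detail the paper leaves implicit.

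\textbf{The gap is in your first bullet.} The inclusion \(\sp{\I_{G_x^j}}\subseteq\sp{\I_{D_x^j}}\) only rules out clashes between internal vertices of \(G_x^j\) and internal vertices of \(H\). It does not stop an internal vertex of \(G_x^j\) and a border vertex of \(H\) from sharing a position with different time tags. Suppose \(t.y\in\I_{D_x^j}\) is adjacent to an internal vertex of \(H\), so \(t.y\in\B_H\). A replacement that re-tags it into \((t+1).y\in\I_{G_x^j}\) satisfies both position inclusion and \(\B_{G_x^j}=\B_{D_x^j}\). Yet your candidate tuple would then carry two distinct names at position \(y\), violating unicity of positions. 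Your second bullet does not close this, since it only controls which edges are shared, not which names occur.

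\textbf{The fix is the paper's use of border-edge preservation for positions.} Take a vertex of \(H\) at a position \(y\in\sp{\I_{D_x^j}}\).
\begin{itemize}
\item By unicity of positions in \(G\), it is the internal vertex \(u\) of \(D_x^j\) at \(y\).
\item It is a border vertex of \(H\), so it lies on an edge to an internal vertex of \(H\), and that edge belongs to \(\dangling{D_x^j}=\dangling{G_x^j}\).
\item Hence \(u\) is still a vertex of \(G_x^j\), necessarily internal since \(\B_{G_x^j}=\B_{D_x^j}\).
\item Unicity of positions inside \(G_x^j\) then forces the internal vertex of \(G_x^j\) at \(y\) to be \(u\) itself.
\end{itemize}
Border vertices of \(G_x^j\) are border vertices of \(D_x^j\), hence vertices of \(G\), so they cannot clash with \(H\) either. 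With this step added, unicity of positions holds and the rest of your edge and port bookkeeping goes through.

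A minor point: for \(G=D_x^j\), the context \((D_x^j)_{\crestriD{\D}{x}}\) is the empty graph, not ``border material''. This does not affect your conclusion \(A^{\calR}_xD_x^j=G_x^j\).
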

\begin{proof}
We first show that Def.~\ref{def:crs-action} is well defined.
If \(G\in\sshift\) contains no disk of \(\D\) at \(x\), the definition
returns \(G\).
Otherwise, functionality and the mutex condition give a unique image
\(G_x^j\).
Let \(C=G_{\overline{\sp{\I_{D_x^j}}}}\), so \(G=D_x^j\sqcup C\) and the
definition returns \(G_x^j\sqcup C\).
It remains to prove that this cup is well defined. 
Context preservation gives \(\possubseteq{\I_{G_x^j}}{\I_{D_x^j}}\), and the 
decomposition of \(G\) gives \(\I_{D_x^j}\cap\I_C=\varnothing\), so the
internal vertices of \(G_x^j\) and of \(C\) are disjoint.
\cite[Lemma~3.10]{arrighi2026causalgraphrewriting} then characterises
well-definedness, with its position clause read as
\ifnamealgebra non-overlapping of position fragments\else distinct
positions\fi\ (Def.~\ref{def:graphs}).
\ifnamealgebra
Non-overlapping holds by the second claim of Lem.~\ref{lem:fragments}: a
vertex \(w\) of \(G_x^j\) lies in \(\namealg{\I_{D_x^j}}\), so if \(w\)
overlapped a vertex \(c\) of \(C\), then some vertex of \(\I_{D_x^j}\) would
overlap \(c\) as well, which the non-overlapping condition of the graph
\(D_x^j\sqcup C\) forbids.
\else
Distinct positions hold as well: by context preservation, every position of
\(G_x^j\) is a position of \(D_x^j\), so a position shared with \(C\) is
carried in \(D_x^j\sqcup C\) by a vertex of both \(D_x^j\) and \(C\). Either
it is a border vertex of \(D_x^j\), kept by \(G_x^j\) since borders are
preserved; or it is an internal vertex of \(D_x^j\) on a border edge, kept by
\(G_x^j\) since border edges are preserved. In both cases \(G_x^j\) and \(C\)
carry the same vertex at that position.
\fi
Dangling coherence and unsaturated ports are the rest of context
preservation.
Border vertices are unchanged and
\(\dangling{G_x^j}=\dangling{D_x^j}\), so a border edge of \(G_x^j\) that
meets an internal vertex of \(C\) is already a border edge of \(D_x^j\),
hence lies in \(C\), and symmetrically.
The ports used on those common border vertices are the ports of
\(D_x^j\sqcup C\).
Acyclicity and finite chains are the remaining clause of \cite[Lemma~3.10]{arrighi2026causalgraphrewriting}.
They are the corresponding hypothesis of Def.~\ref{def:crs}: since
\(D_x^j\sqcup C\in\sshift\), the union of the edge sets of \(G_x^j\) and \(C\)
is acyclic and contains only finite chains.
The four clauses hold, so \(G_x^j\sqcup C\) is well defined.\\
Finally, \(A^\calR_{(-)}\) is a local rule.
Lem.~\ref{lem:mutex-to-neigh} makes \(\restriD{\D}{}\) a border-agnostic
neighbourhood scheme, and the first case of
Def.~\ref{def:crs-action} is the locality equation for the disk it selects;
outside that domain the action is the identity.
The signature \(\sshift\to\sshift\) follows: on a graph with no disk the
value is the graph itself, and once \(G_x^j\sqcup C\) is well defined,
\(\sshift\)-preservation yields \(G_x^j\sqcup C\in\sshift\).
Renaming invariance is inherited from the rule family, since \(\sshift\) is
closed under renaming and \(\restriD{\D}{}\) is renaming-invariant.
Context preservation gives \(\B_{G_x^j}=\B_{D_x^j}\), so the action
preserves the border.
\end{proof}

Finally, the constructions of Lem.~\ref{lem:local-to-crs} and
Def.~\ref{def:crs-action} are inverse of one another. This establishes a
formal correspondence between the local rules of causal graph rewriting and
rewrite systems in the traditional sense of pattern replacement.

\begin{theorem}[Equivalence between local operators and causal rewrite
systems]\label{thm:representation}
The function mapping any pair of a border-agnostic neighbourhood scheme
\(\restri{}\) and a \(\restri{}\)-local rule \(A_{(-)}\), to the causal rewrite
system
\[
\{D\to A_xD\}_{D\in\diskA{x},\,x\in\X}
\]
is a bijection, whose inverse associates to any causal rewrite system
\(\calR=\{D_x^j\to G_x^j\}_{x\in\X,j\in J}\) the pair formed by
\(\restriD{\D}{}\), with
\(\D:x\mapsto\{D_x^j\mid j\in J\}\), and the
\(\restriD{\D}{}\)-local operator \(A^\calR_{(-)}\).
\end{theorem}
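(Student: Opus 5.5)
Both maps are already known to land in the right place: Lem.~\ref{lem:local-to-crs} shows that the forward map produces a causal rewrite system, and Lem.~\ref{lem:crs-to-local} shows that the backward map produces a pair consisting of a border-agnostic neighbourhood scheme (Lem.~\ref{lem:mutex-to-neigh}) and a local rule for it. It therefore remains to check that the two composites are identities. I would mirror the structure of the proof of Prop.~\ref{prop:neigh-mutex}, handling the neighbourhood-scheme component and the rule component separately.

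\emph{Local rule $\to$ CRS $\to$ local rule.} Start from $(\restri{},A_{(-)})$. The CRS has left-hand sides $\D_x=\diskA{x}$, so the recovered scheme is $\restriD{\diskA{}}{}$, which equals $\restri{}$ by the right-inverse part of Prop.~\ref{prop:neigh-mutex}. For the rule, fix $G\in\sshift$. If $G\notin\graphvalid{\restri{}}{x}=\graphvalidmutex{\diskA{}}{x}$, both operators return $G$. Otherwise let $D=G_{\restri{x}}$; it is the unique disk contained in $G$ by Lem.~\ref{rem:disk-unique}. We have $G=D\sqcup G_{\crestri{x}}$ and $\overline{\sp{\I_D}}=\overline{\restr{x}{G}}$, so the first case of Def.~\ref{def:crs-action} applies with this $D$. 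It returns $A_xD\sqcup G_{\crestri{x}}$, which is exactly $A_xG$ by Def.~\ref{def:locality}. One small point needs care: the case condition ``$G=D_x^j\sqcup G_{\overline{\sp{\I_{D_x^j}}}}$'' must hold precisely when $D_x^j\sqsubseteq G$. I would record this as a short observation: if $D\sqsubseteq G$, then $D=G_{\sp{\I_D}}$, so the decomposition $G=G_X\sqcup G_{\overline X}$ applies.

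\emph{CRS $\to$ local rule $\to$ CRS.} Start from $\calR=\{D_x^j\to G_x^j\}$ with domain $\D$. The left-inverse part of Prop.~\ref{prop:neigh-mutex} gives $\sshift_{\restriD{\D}{x}}=\D_x=\{D_x^j\}_j$, so the rebuilt system has the same left-hand sides. For each $D_x^j$ we must show $A^\calR_xD_x^j=G_x^j$. Here $G=D_x^j$, and the context $G_{\overline{\sp{\I_{D_x^j}}}}$ is the induced subgraph of $D_x^j$ on positions outside its interior. This is the empty graph, or at least a graph with no internal vertices whose cup with $G_x^j$ adds nothing: every border vertex and border edge it could contribute is already in $G_x^j$ by context preservation. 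The result is therefore $G_x^j$. Functionality guarantees that the index $j$ selected is the original one, so the rule sets coincide as indexed families up to the reindexing by disks. I would state the bijection at the level of the function $f:D\mapsto G$, which functionality makes legitimate.

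The main obstacle is the bookkeeping around the context $G_{\overline{\sp{\I_D}}}$: checking that it is trivial when $G=D$, and that the decomposition condition in Def.~\ref{def:crs-action} is equivalent to $D\sqsubseteq G$. Both rest on the conventions for induced subgraphs and $\sqcup$ recalled in Sec.~\ref{sec:graphs}. I would settle them first, as a preliminary remark, before running the two inverse computations.
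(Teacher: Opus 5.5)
Your proposal is correct and follows the same route as the paper. The scheme component reduces to Prop.~\ref{prop:neigh-mutex}, and the rule component reduces to the two computations $A^\calR_xG=A_xG_{\restri{x}}\sqcup G_{\crestri{x}}=A_xG$ and $A^\calR_xD_x^j=G_x^j$. The paper leaves several points implicit that you handle correctly: the off-domain case, the equivalence between the decomposition condition of Def.~\ref{def:crs-action} and $D\sqsubseteq G$, the emptiness of the context when $G=D_x^j$, and the reindexing of rules by their left-hand sides.
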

\begin{proof}
The two constructions are well defined by
Lems.~\ref{lem:local-to-crs} and~\ref{lem:crs-to-local}.
Starting with \((\restri{},A)\),
Prop.~\ref{prop:neigh-mutex} gives
\(\restriD{\diskA{}}{}=\restri{}\). For all \(x\in\X\) and
\(G=D_x^j\sqcup G_{\crestri{x}}\),
\begin{align*}
A_x^\calR G
&= G_x^j\sqcup G_{\crestriD{}{x}}
= A_x D_x^j\sqcup G_{\overline{\restriD{}{x}}}\\
&= A_x G_{\restriD{}{x}}\sqcup G_{\overline{\restriD{}{x}}}
= A_x G.
\end{align*}
Starting with \(\calR\), its induced local rule satisfies
\(A_x^\calR D_x^j=G_x^j\), while the disks of
\(\restriD{\D}{}\) are exactly the elements of \(\D\). Reconstructing the
rule family therefore returns exactly \(\calR\).
\end{proof}

\section{Related work and discussion}\label{sec:related}

The graph-dynamics lineage motivating this work starts from causal graph
dynamics~\cite{ArrighiCGD}, including reversible and quantum
variants~\cite{ArrighiRCGD,ArrighiQCGD}. Quantum Network
Theory~\cite{ArrighiQNT} isolates extensibility and idempotence as conditions
under which neighbourhood restrictions can be promoted to the quantum
setting. Space-time deterministic graph
rewriting~\cite{arrighi2024spacetimedeterministicgraphrewriting} replaces
synchronous global evolution by asynchronous local updates on dependency
DAGs. The present paper does not change that operational model; it strengthens the conditions on neighbourhoods on
which they are built.

Traditional graph rewriting starts from syntactic rules and formalises their
application by constructions such as double-pushout rewriting; modern
compositional accounts make interfaces and contexts
explicit~\cite{HarmerFundamentals}. Our direction is converse: we start from
the axiomatic local rules used for graph dynamics and identify when they
admit a pattern-replacement presentation. Causal rewrite systems are a first
step towards a double-pushout characterisation of causal graph rewriting:
their rules are given by explicit left- and right-hand sides, and context
preservation fixes the borders along which they are glued to the context,
much as the interface of a double-pushout rule.

\ifnamealgebra
The name algebra originates in work reconciling reversible locality with
vertex creation and destruction~\cite{ArrighiCreation}. Here time tags
commute with splitting and merging. Border-agnosticism is orthogonal to that
construction: the former controls ambient context, while the latter records
which apparent creations reuse algebraic material from the input.
\else
A first extension would be to replace the acyclicity and finite-chains
conditions of causal rewrite systems by local criteria, yielding a fully
constructive definition: which pattern substitutions are guaranteed to map
DAGs to DAGs? A second extension is reversibility.  The present
characterisation should be useful there. First, unique pattern matching
around each position, which is what border-agnosticism provides, seems a
natural requirement in a reversible setting. Second, the local conditions
defining causal rewrite systems could lead to local, constructive criteria
for reversibility.
\fi

\section{Summary and Conclusion}\label{sec:conclusion}

We studied neighbourhood schemes in causal graph rewriting, and introduced
border-agnosticism (Def.~\ref{def:strong-ext}): the neighbourhood selected
in a disk is the one selected in any graph of \(\sshift\) containing it.
Extensivity alone does not imply it (Ex.~\ref{ex:rounded}). We justified
border-agnosticism in three ways. First, it is equivalent to the safety
principle, which reads borders as missing information
(Prop.~\ref{prop:safety}). Second, it is equivalent to disk uniqueness: any
graph contains at most one disk at each position (Rem.~\ref{rem:disk-unique}).
Third, it rules out disks whose borders are global, on which local rules may
destroy border edges (Ex.~\ref{ex:neighbours}): on a border-completable subset of
graphs (Def.~\ref{def:completable}), every border-agnostic scheme is
disk-completable (
Lem.~\ref{lem:disk-completable}), so the
structural properties of~\cite{arrighi2026causalgraphrewriting} hold on
every disk.

Building on these results, we gave a representation of local rules by
pattern replacement. Border-agnostic neighbourhood schemes are in bijection
with mutex sets of cones (Def.~\ref{def:mutex}, Prop.~\ref{prop:neigh-mutex}),
the unambiguity condition being the counterpart of disk uniqueness. Local
rules are in bijection with causal rewrite systems (Def.~\ref{def:crs},
Thm.~\ref{thm:representation}): every local rule induces one
(Lem.~\ref{lem:local-to-crs}), and every causal rewrite system defines a
local rule (Lem.~\ref{lem:crs-to-local}). In the latter direction, context
preservation guarantees, up to acyclicity and finite chains, that replacing
a pattern in any graph is well defined.

Ordinary extensivity is thus sufficient for idempotence, but not for
interpreting borders as missing information, nor disks as
context-independent left-hand sides; border-agnosticism closes this gap.

\subsubsection*{Data Availability Statement}
No data are associated with this paper.

\bibliographystyle{splncs04}
\bibliography{biblio}
\end{document}